\newtheorem{theorem}{Theorem}
\newtheorem{prop}[theorem]{Proposition}
\theoremstyle{definition}
\newtheorem{definition}[theorem]{Definition}
\newtheorem{example}[theorem]{Example}
\renewcommand{\(}{\left(}
\renewcommand{\)}{\right)}
\renewcommand{\[}{\left[}
\renewcommand{\]}{\right]}
\newcommand\Eb{\mathbb{E}}
\newcommand\Pb{\mathbb{P}}
\newcommand\Rb{\mathbb{R}}
\newcommand{\Abf}[0]{\mathbf{A}}
\newcommand{\Bbf}[0]{\mathbf{B}}
\newcommand{\Cbf}[0]{\mathbf{C}}
\newcommand{\Dbf}[0]{\mathbf{D}}
\newcommand{\Ebf}[0]{\mathbf{E}}
\newcommand{\Hbf}[0]{\mathbf{H}}
\newcommand{\Ibf}[0]{\mathbf{I}}
\newcommand{\Mbf}[0]{\mathbf{M}}
\newcommand{\Pbf}[0]{\mathbf{P}}
\newcommand{\Sbf}[0]{\mathbf{S}}
\newcommand{\Ubf}[0]{\mathbf{U}}
\newcommand{\Vbf}[0]{\mathbf{V}}
\newcommand{\Wbf}[0]{\mathbf{W}}
\newcommand{\abf}[0]{\mathbf{a}}
\newcommand{\bbf}[0]{\mathbf{b}}
\newcommand{\zbf}[0]{\mathbf{z}}
\newcommand{\Sc}[0]{\mathcal{S}}
\newcommand{\Ee}{\boldsymbol{\mathscr{E}}}
\newcommand{\Ue}{\boldsymbol{\mathscr{U}}}
\newcommand{\Ve}{\boldsymbol{\mathscr{V}}}
\newcommand{\We}{\boldsymbol{\mathscr{W}}}
\newcommand{\Xe}{\boldsymbol{\mathscr{X}}}
\newcommand{\Ye}{\boldsymbol{\mathscr{Y}}}
\newcommand{\diag}{\textnormal{diag}}
\newcommand{\fl}[0]{\operatorname{fl}}
\newcommand{\Uniform}[0]{\operatorname{Uniform}}
\newcommand{\op}[0]{\operatorname{op}}
\newcommand{\set}[1]{\left\{ #1 \right\}}
\newcommand{\rowvec}[1]{\begin{bmatrix} #1 \end{bmatrix}}
\newcommand{\defeq}{\stackrel{\text{\tiny def}}{=}}  
\begin{document}

\title{Randomization of Approximate Bilinear Computation for Matrix Multiplication\thanks{The Version of Record of this manuscript has been published and is available in International Journal of Computer Mathematics: Computer Systems Theory 30 December 2020 \url{http://www.tandfonline.com/10.1080/23799927.2020.1861104}}}

\author{
	\name{Osman Asif Malik and Stephen Becker}
	\affil{Department of Applied Mathematics, University of Colorado Boulder, USA}
}

\maketitle

\begin{abstract}
	We present a method for randomizing formulas for bilinear computation of matrix products.
	We consider the implications of such randomization when there are two sources of error: One due to the formula itself only being approximately correct, and one due to using floating point arithmetic.  
	Our theoretical results and numerical experiments indicate that our method can improve performance when each of these error sources are present individually, as well as when they are present at the same time.
\end{abstract}

\begin{keywords}
randomized algorithms; approximate algorithms; matrix multiplication; Strassen's algorithm
\end{keywords}

\section{Introduction} \label{sec:introduction}

Suppose $\Abf, \Bbf \in \Rb^{n \times n}$. In this paper, we are concerned with formulas for computing $\Cbf = \Abf \Bbf$ that take the form
\begin{equation} \label{eq:bilinear-computation}
	c_{ij} = \sum_{r=1}^R w_{ijr} \bigg(\sum_{k,l = 1}^n u_{k l r} a_{kl}\bigg) \bigg(\sum_{k',l' = 1}^n v_{k'l'r} b_{k'l'}\bigg),
\end{equation}
$i,j \in [n] \defeq \{1,2,\ldots,n\}$, where each $\Ue = (u_{k l r})$, $\Ve = (v_{k'l'r})$, and $\We = (w_{ijr})$ is a tensor containing real numbers, and $c_{ij}$ is the element at position $(i,j)$ in $\Cbf$, with similar notation for elements of $\Abf$ and $\Bbf$. Such a formula is called a \emph{bilinear computation} (BC) \citep{blaser2003}. We can rewrite the expression above as
\begin{equation} \label{eq:bilinear-computation-2}
c_{ij} = \sum_{k,l = 1}^n \sum_{k',l' = 1}^n a_{kl} b_{k'l'} \sum_{r=1}^R u_{klr} v_{k'l'r} w_{ijr}.
\end{equation}
Comparing this to the standard algorithm for matrix multiplication, we can see that for the computation \eqref{eq:bilinear-computation} to be exact, $\Ue$, $\Ve$ and $\We$ must satisfy
\begin{equation} \label{eq:requirement}
\sum_{r=1}^R u_{klr} v_{k'l'r} w_{ijr} = \delta_{ki} \delta_{l'j} \delta_{lk'} \;\;\;\; \text{for all } (k,l,k',l',i,j) \in [n]^6,
\end{equation}
where $\delta$ is the Kronecker delta (i.e., $\delta_{ki} = 1$ if $k=i$ and $0$ otherwise) \cite{brent1970}. If \eqref{eq:requirement} is satisfied, we say that \eqref{eq:bilinear-computation} is an \emph{exact bilinear computation} (EBC). The smallest positive integer $R$ for which there exist $\Ue$, $\Ve$ and $\We$ such that \eqref{eq:requirement} holds is referred to as the \emph{rank} of the computation. 
We refer to any algorithm of the form \eqref{eq:bilinear-computation} for which $R < n^3$ as \emph{fast}, since the asymptotic complexity is smaller than that of standard matrix multiplication, which has complexity $O(n^3)$.
Examples of fast computations of the form \eqref{eq:bilinear-computation} include Strassen's algorithm for $2 \times 2$ matrices \cite{strassen1969}, and Laderman's algorithm for $3 \times 3$ matrices \cite{laderman1976}. 
The formula \eqref{eq:bilinear-computation} is also valid if $\Abf$, $\Bbf$ and $\Cbf$ are of size $mn \times mn$ and $a_{kl}$, $b_{k'l'}$ and $c_{ij}$ are replaced by submatrices $\Abf_{kl}$, $\Bbf_{k'l'}$ and $\Cbf_{ij}$ of size $m \times m$, so that e.g.\ 
\begin{equation}
	\Abf = 
	\begin{bmatrix}
		\Abf_{1 1} & \Abf_{1 2} & \cdots & \Abf_{1 n} \\
		\Abf_{2 1} & \Abf_{2 2} & \cdots & \Abf_{2 n} \\
		\vdots     & \vdots     & 		 & \vdots 	  \\
		\Abf_{n 1} & \Abf_{n 2} & \cdots & \Abf_{n n} \\
	\end{bmatrix}.
\end{equation} 
This is why fast algorithms of the form \eqref{eq:bilinear-computation} can be used recursively to compute the product of larger matrices.

Define 6-way tensors 
\begin{align}
	\Xe &\defeq (\delta_{ki} \delta_{l'j} \delta_{lk'})_{(k,l,k',l',i,j) \in [n]^6}, \label{eq:def-x} \\
	\Ye &\defeq \big(\sum_{r=1}^R u_{klr} v_{k'l'r} w_{ijr}\big)_{(k,l,k',l',i,j) \in [n]^6}. \label{eq:def-y}
\end{align}
The condition in \eqref{eq:requirement} can be written succinctly as $\Ye = \Xe$. 
For both matrix and tensor inputs, let $\|\cdot\|$ denote the Frobenius norm, i.e., the square root of the sum of the square of all elements. 
We are interested in BCs that are only approximately correct, which motivates the following definition.
\begin{definition}[Approximate bilinear computation]
	Let $\Abf, \Bbf \in \Rb^{mn \times mn}$, where $m$ is a positive integer, and let $f$ be defined blockwise via
	\begin{equation} \label{eq:bilinear-computation-block}
		f(\Abf, \Bbf)_{ij} \defeq \sum_{r=1}^R w_{ijr} \bigg(\sum_{k,l = 1}^n u_{k l r} \Abf_{kl}\bigg) \bigg(\sum_{k',l' = 1}^n v_{k'l'r} \Bbf_{k'l'}\bigg) \in \Rb^{m \times m} \;\;\;\; \text{for } i,j \in [n],
	\end{equation}
	where $f(\Abf, \Bbf)_{ij}$, $\Abf_{kl}$ and $\Bbf_{k'l'}$ are submatrices of $f(\Abf,\Bbf)$, $\Abf$ and $\Bbf$, respectively, of size $m \times m$.
	We say that $f$ is an \emph{approximate bilinear computation} (ABC) with parameters $n$ and $\tau$, or $(n,\tau)$-ABC for short, if $\|\Ye - \Xe\| \leq \tau$. 
\end{definition}

We present a method for randomizing the computation \eqref{eq:bilinear-computation-block} and consider the implications of this approach when there are two kinds of error present.

\begin{itemize}
\item \textbf{Error due to an approximate algorithm.} Using an $(n, \tau)$-ABC with $\tau > 0$ will introduce error even if exact arithmetic is used. ABCs are interesting since it is sometimes hard or impossible to convert an approximate algorithm found numerically to an exact algorithm. Some examples of such approximate algorithms found numerically appear in \citep{smirnov2013}. Other papers that search for fast algorithms numerically include \citep{brent1970,johnson1986,benson2015,elser2016}.

\item \textbf{Numerical error due to using floating point arithmetic.} 
Although the standard algorithm for matrix multiplication also incurs numerical error, it is more severe in fast algorithms since the computation of an element $c_{ij}$ can involve elements from $\Abf$ and $\Bbf$ other than the vectors $\abf_{i:}$ and $\bbf_{:j}$, which are the $i$th row of $\Abf$ and $j$th column of $\Bbf$, respectively. In exact arithmetic, these additional terms cancel out for exact fast algorithms, but in finite precision those cancellations are typically not exact which can lead to substantial numerical error. These issues are exacerbated when the algorithm itself is only approximately correct. These considerations are especially important when using low precision environments, such as when computing on a GPU \citep{huang2018}. Low precision computation---using 32-bit, 16-bit, and even 8-bit precision numbers---is popular in, e.g., machine learning \citep{gupta2015, wang2018b, hopkins2019}.
\end{itemize}

We make the following contributions in this paper:
\begin{itemize}
	\item We propose a method for randomizing BCs for matrix multiplication which does not increase the leading order computational complexity of the algorithm.
	\item When exact arithmetic is used, we show that our randomized ABCs compute the correct matrix product in expectation. We also provide some performance guarantees.
	\item We show that these exact arithmetic results largely carry over to a setting when all computations are done in floating point arithmetic.
	\item When floating point arithmetic is used, we provide numerical evidence that randomizing EBCs using our scheme can reduce numerical error and improve robustness to adversarial examples.
\end{itemize}

\subsection{Related work} \label{sec:related-work}

Bini et al.~\citep{bini1979, bini1980} introduce a concept similar to ABC called Arbitrary Precision Approximating (APA) algorithms for matrix multiplication. 
For an APA algorithm, the tensor $\Xe$ and $\Ye$ defined in \eqref{eq:def-x} and \eqref{eq:def-y} satisfy the relationship
\begin{equation}
	\Ye(\varepsilon) + \Ee(\varepsilon) = \Xe,
\end{equation}
where 
\begin{equation} \label{eq:APA-y}
	\Ye(\varepsilon) \defeq \big(\sum_{r=1}^R u_{klr}(\varepsilon) v_{k'l'r}(\varepsilon) w_{ijr}(\varepsilon) \big)_{(k,l,k',l',i,j) \in [n]^6}
\end{equation}
is a function of $\varepsilon$, and $\Ee$ represents the error in the approximation $\Ye \approx \Xe$. 
Each entry in $\Ee$ is assumed to be a polynomial with zero constant coefficient, hence for every entry $\Ee_{klk'l'ij}(0) = 0$.
Consequently, an APA algorithm can be made arbitrarily accurate by making $\varepsilon$ small enough.
Moreover, as shown in \citep{bini1980b}, an EBC can be derived from an APA algorithm (see also the related discussion in Section~15.2 of \citep{burgisser2013}). 
This EBC takes the form
\begin{equation} \label{eq:APA-to-EBC}
	\sum_{i=1}^{d+1} \alpha_i \Ye(\varepsilon_i) = \Xe, 
\end{equation}
where all $\varepsilon_i$ are distinct, $d := \max_{klk'l'ij \in [n]^6} \deg(\Ee_{klk'l'ij}(\varepsilon))$ is the maximum degree of the polynomials describing the entries in $\Ee$, and where $[\alpha_1, \, \ldots, \,\alpha_{d+1}]$ is chosen as the solution to a certain linear system.
Bini~\citep{bini1980b} uses this approach to convert the APA scheme for $12 \times 12$ matrix multiplication in \citep{bini1979} to an exact one, which can be used recursively to get a matrix multiplication algorithm with complexity $O(n^{2.7799})$.
Fixing the error in an APA algorithm yields an ABC with some parameter $\tau$.
In some of our numerical experiments, we use an ABC which we get by fixing $\varepsilon$ in the $12 \times 12$ APA scheme of \citep{bini1979}.
We also consider an associated EBC derived via \eqref{eq:APA-to-EBC} in other experiments.
Our paper focuses on ABCs instead of APA algorithms since, in practice, it may be difficult or impossible to express a pair $(\Ye,\Ee)$ found numerically in terms of polynomials in $\varepsilon$.

To the best of our knowledge, our paper is the first to consider randomization as a tool for improving BCs for matrix multiplication which are only approximately correct. 
Various randomized algorithms for the standard matrix multiplication algorithm have been considered in other works; see e.g.\ \cite{drineas2006b, pagh2013}. 
For EBCs in floating point arithmetic, a patent by Castrapel and Gustafson \cite{castrapel2007} describes a randomized version of Strassen's algorithm which they claim reduces numerical error. 
They provide empirical support for this, but no mathematical proof. 
Our method generalizes their approach by randomly choosing from a wider range of equivalent algorithms. 
Additionally, our method can be applied to any formula of the form \eqref{eq:bilinear-computation}. 

Early works that analyze the stability of fast algorithms for matrix multiplication include \citep{brent1970a, bini1980, bini1980a}. 
Other works, such as \citep{dumitrescu1998,kaporin2004,demmel2007,ballard2016,desilva2018}, attempt to improve the stability of fast algorithms using other approaches that do not rely on randomization. 
In Section~\ref{sec:setting-2-experiments}, we compare our proposed method to the rescaling method in \citep{ballard2016}. 
In these experiments, we also consider a restricted version of our method which corresponds to the method in \citep{castrapel2007}.

\section{Randomization of bilinear computation for matrix multiplication} \label{sec:randomization-approx-matrix-mult}

In Section~\ref{sec:exact-arithmetic}, we first consider a setting in which exact arithmetic is used. In Section~\ref{sec:finite-arithmetic}, we then consider a setting in which floating point arithmetic is used.

\subsection{In exact arithmetic} \label{sec:exact-arithmetic}

We present our randomization scheme in the setting when the input matrices are $mn \times mn$ with $m \geq 1$\footnote{We present results for square matrices, but we believe the results can be extended to rectangular matrices.}.
Accordingly, suppose $f : \Rb^{mn \times mn} \times \Rb^{mn \times mn} \rightarrow \Rb^{mn \times mn}$ is an $(n,\tau)$-ABC. 
We will now define a randomized version of $f$, denoted by $\hat{f}$, which has the following property: 
For all $\Abf, \Bbf \in \Rb^{mn \times mn}$, $\Eb[\hat{f}(\Abf, \Bbf)] = \Abf \Bbf$.
To that end, let $\{s_i(j)\}_{(i,j) \in [3] \times [n]}$ be a collection of i.i.d.\ Rademacher random variables, i.e., each satisfying $\Pb [s_i(j) = +1] = \Pb [s_i(j) = -1] = 1/2$. Moreover, let $\pi_i : [n] \rightarrow [n]$, $i \in [3]$, be independent random permutation functions, each satisfying $(\forall (j,k) \in [n]^2)$ $\Pb[\pi_i(j) = k] = 1/n$. Let $\Sbf_i \in \Rb^{mn \times mn}$, $i \in [3]$, be block diagonal matrices with the $j$th nonzero block equal to $s_i(j) \Ibf_m$, where $\Ibf_m$ is the $m \times m$ identity matrix. Also, let $\Pbf_i \in \Rb^{mn \times mn}$, $i \in [3]$, be permutation matrices divided into $m \times m$ blocks, with blocks on position $(\pi_i(j), j)$, $j \in [n]$, equal to $\Ibf_m$ and all other blocks equal to zero. Define $\Mbf_i = \Pbf_i \Sbf_i$, $i \in [3]$. Note that each $\Mbf_i$ is orthogonal (i.e., $\Mbf_i^{-1} = \Mbf_i^\top$). We propose the following definition of $\hat{f}$.
\begin{definition}[Randomized approximate bilinear computation] \label{def:RandABC}
	Let $f$ be an $(n,\tau)$-ABC.
	We define a corresponding \emph{randomized approximate bilinear computation} with parameters $n$, $\tau$ and $\kappa$, or $(n, \tau, \kappa)$-RandABC for short, via
	\begin{equation} \label{eq:f-hat-def}
		\hat{f}(\Abf, \Bbf) \defeq 
		(1 - \kappa)^{-1} \Mbf_1^\top f(\Mbf_1 \Abf \Mbf_2^\top, \Mbf_2 \Bbf \Mbf_3^\top) \Mbf_3,
	\end{equation}
	where
	\begin{equation} \label{eq:epsilon-def}
		\kappa \defeq \frac{1}{n^3} \sum_{(i,j,l) \in [n]^3} \bigg(1 - \sum_{r=1}^R u_{ilr} v_{ljr} w_{ijr}\bigg)
	\end{equation}
	is assumed to satisfy $\kappa \neq 1$.
\end{definition}
Observe that if $f$ was an exact algorithm, then $\kappa = 0$ and $\hat{f}(\Abf, \Bbf) = \Abf \Bbf$ since the $\Mbf_i$'s would cancel out due to orthogonality. Since the cost of applying $\Mbf_i$ to an $mn \times mn$ matrix is $O(m^2 n^2)$, computing $\hat{f}(\Abf, \Bbf)$ has the same leading order complexity as computing $f(\Abf, \Bbf)$.
\begin{prop} \label{prop:expectation}
	Let $\hat{f}$ be an $(n, \tau, \kappa)$-RandABC with $\tau \geq 0$ and $\kappa \neq 1$ . For all $\Abf, \Bbf \in \Rb^{mn \times mn}$ we have $\Eb[\hat{f}(\Abf, \Bbf)] = \Abf \Bbf$.
\end{prop}
\begin{proof}
	Let $\hat{\Cbf} \defeq \hat{f}(\Abf, \Bbf)$. Considering the $(i,j)$th block of $\hat{\Cbf}$, and going through some tedious but straightforward algebra, we get
	\begin{equation} \label{eq:C-hat}
		\hat{\Cbf}_{ij} 
		= \frac{s_1(i) s_3(j)}{1-\kappa} 
		\sum_{k,l=1}^n \sum_{k', l' = 1}^n s_1(k) s_2(l) s_2(k') s_3(l') \Abf_{kl} \Bbf_{k'l'}  
		\sum_{r=1}^R u_{\pi_1(k) \pi_2(l) r} v_{\pi_2(k') \pi_3(l') r} w_{\pi_1(i) \pi_3(j) r}.
	\end{equation}
	If we take the expectation of this equation with respect to the random variables $\{s_i(j)\}$, most terms will vanish: If $i = k$, $l = k'$ and $j = l'$ for a given term, then the product of the $s_i$'s will be 1; otherwise, the expectation of that term will be zero due to independence and the fact that each $\Eb[s_i(j)] = 0$. Consequently, we have 
	\begin{equation}
		\Eb [ \hat{\Cbf}_{ij} ] 
		= (1-\kappa)^{-1} \sum_{l=1}^n \Abf_{il} \Bbf_{lj} 
		\Eb \Big[ \sum_{r=1}^R u_{\pi_1(i) \pi_2(l) r} v_{\pi_2(l) \pi_3(j) r} w_{\pi_1(i) \pi_3(j) r} \Big]
		= \sum_{l=1}^n \Abf_{il} \Bbf_{lj} = (\Abf \Bbf)_{ij},
	\end{equation}
	where the second equality is true since \eqref{eq:epsilon-def} implies that
	\begin{equation}
		\Eb \Big [ \sum_{r=1}^R u_{\pi_1(i) \pi_2(l) r} v_{\pi_2(l) \pi_3(j) r} w_{\pi_1(i) \pi_3(j) r} \Big ] = 1 - \kappa.
	\end{equation}
\end{proof}

It may seem surprising that Proposition~\ref{prop:expectation} holds for \emph{any} $\tau \geq 0$, since this means that even an ABC with an arbitrarily large error will be correct in expectation once randomized as in Definition~\ref{def:RandABC}.
However, as we will see in Proposition~\ref{prop:guarantee}, in order to be able to guarantee a small error for \emph{any realization} of $\hat{f}(\Abf,\Bbf)$, $\tau$ also needs to be small. 

Define $B_\mu^{(n)} \defeq \set{\Mbf \in \Rb^{n \times n} : \|\Mbf\| \leq \mu}$ and $\eta \defeq (1-\kappa)^{-1} - 1 = O(\kappa)$. The following proposition provides performance guarantees for $f$ and $\hat{f}$. 
\begin{prop} \label{prop:guarantee}
	Consider matrices $\Abf, \Bbf \in B_\mu^{(mn)}$.
	\begin{enumerate}[(i)]
		\item If $f$ is an $(n, \tau)$-ABC, then $\|f(\Abf,\Bbf) - \Abf\Bbf\| \leq \mu^2 \tau$.
		\item If $\hat{f}$ is an $(n, \tau, \kappa)$-RandABC, then $\|\hat{f}(\Abf,\Bbf) - \Abf\Bbf\| \leq \mu^2 |\eta| \|\Ye\| + \mu^2 \tau \lesssim \mu^2 |\kappa| \|\Ye\| + \mu^2 \tau $.  
		\item Moreover, $\sup_{\Abf, \Bbf \in B_\mu^{(mn)}} \| \hat{f}(\Abf, \Bbf) - \Abf \Bbf \| \leq \sup_{\Abf, \Bbf \in B_\mu^{(mn)}} \| f(\Abf,\Bbf) - \Abf \Bbf \| + \mu^2 |\eta| \|\Ye\|$.
	\end{enumerate}
\end{prop}
\begin{proof}
	We have
	\begin{align}
		&\|f(\Abf, \Bbf) - \Abf \Bbf\|^2 
		= \sum_{i,j=1}^n \bigg\| \sum_{k,l=1}^n \sum_{k',l'=1}^n \Abf_{kl} \Bbf_{k'l'} (y_{klk'l'ij} - x_{klk'l'ij}) \bigg\|^2 \\
		& \leq \sum_{i,j=1}^n \bigg( \sum_{k,l=1}^n \sum_{k',l'=1}^n \|\Abf_{kl}\| \|\Bbf_{k'l'}\| |y_{klk'l'ij} - x_{klk'l'ij}| \bigg)^2 \\
		& \leq \sum_{i,j=1}^n \bigg( \sum_{k,l=1}^n \sum_{k',l'=1}^n (\|\Abf_{kl}\| \|\Bbf_{k'l'}\|)^2 \bigg) \bigg( \sum_{k,l=1}^n \sum_{k',l'=1}^n |y_{klk'l'ij} - x_{klk'l'ij}|^2 \bigg) \\
		& = \bigg( \sum_{k,l=1}^n \|\Abf_{kl}\|^2 \sum_{k',l'=1}^n \|\Bbf_{k'l'}\|^2 \bigg) \bigg( \sum_{i,j=1}^n \sum_{k,l=1}^n \sum_{k',l'=1}^n |y_{klk'l'ij} - x_{klk'l'ij}|^2 \bigg)\\
		& = \|\Abf\|^2 \|\Bbf\|^2 \|\Ye - \Xe\|^2 \leq \mu^4 \tau^2,
	\end{align}
	where the first inequality follows from first applying the triangle inequality and then using the sub-multiplicativity of the Frobenius norm, and the second inequality follows from applying the Cauchy--Schwarz inequality. This proves (i). Since the Frobenius norm is invariant under unitary transformations,
	\begin{equation} \label{eq:guarantee-pf-2}
		\|\hat{f}(\Abf, \Bbf) - \Abf \Bbf\| = \| (1-\kappa)^{-1} f(\tilde{\Abf}, \tilde{\Bbf}) - \tilde{\Abf} \tilde{\Bbf} \|,
	\end{equation}
	where $\tilde{\Abf} = \Mbf_1 \Abf \Mbf_2^\top$, $\tilde{\Bbf} = \Mbf_2 \Bbf \Mbf_3^\top$. Going through the same computations as in the proof of (i), but with the extra $(1-\kappa)^{-1}$ term, we therefore get
	\begin{equation}
		\|\hat{f}(\Abf, \Bbf) - \Abf \Bbf\| \leq \|\tilde{\Abf}\| \|\tilde{\Bbf}\| \|(1-\kappa)^{-1} \Ye - \Xe\|
		= \|\Abf\| \|\Bbf\| \|(1-\kappa)^{-1} \Ye - \Xe\| 
		\leq \mu^2 ( |\eta|\| \Ye \| + \tau)
	\end{equation}
	where the equality once again uses the unitary invariance of the Frobenius norm, and the last inequality follows from the definitions of $\eta$ and the triangle inequality.	
	This proves (ii).
	Fix $\Abf, \Bbf \in B_\mu^{(mn)}$. Applying the triangle inequality to \eqref{eq:guarantee-pf-2} and using the definition of $\eta$, we get 
	\begin{equation} \label{eq:guarantee-pf-3}
		\|\hat{f}(\Abf, \Bbf) - \Abf \Bbf\|	\leq \| f(\tilde{\Abf}, \tilde{\Bbf}) - \tilde{\Abf} \tilde{\Bbf} \|
		+ |\eta| \, \|f(\tilde{\Abf}, \tilde{\Bbf})\|.
	\end{equation}
	By doing computations almost identical to those in the proof of (i), we get the bound
	\begin{equation} \label{eq:guarantee-pf-4}
		\|f(\tilde{\Abf}, \tilde{\Bbf})\| \leq \|\tilde{\Abf}\| \|\tilde{\Bbf}\| \|\Ye\| \leq \mu^2 \|\Ye\|,
	\end{equation}
	since $\tilde{\Abf}, \tilde{\Bbf} \in B_\mu^{(mn)}$. Combining \eqref{eq:guarantee-pf-3} and \eqref{eq:guarantee-pf-4} and taking supremums appropriately proves (iii).
\end{proof}

Points (i) and (ii) in Proposition~\ref{prop:guarantee} provide absolute performance guarantees for $f$ and $\hat{f}$, respectively. Note that a tighter version of (i) holds, with $\|\Abf\|$ replaced by $\sqrt{\sum_{kl} \|\Abf_{kl}\|_2^2}$, where $\|\cdot\|_2$ is the spectral norm. We keep (i) in its current looser form to make it easier to compare to (ii) and (iii). Point (iii) shows that the worst case performance of $\hat{f}$ is no worse than that of $f$ plus a constant.
In fact, that constant, which also appears in (ii), can be bounded as follows. 
\begin{prop} \label{prop:constant-bound}
	If $|\kappa| \leq 1/2$, then
	\begin{equation}
		|\eta| \mu^2 \|\Ye\| \leq 2 \mu^2 (n^{-5/2} \tau + n^{-1}) \tau. 
	\end{equation}
\end{prop}
\begin{proof}
	Let $\zbf \in \Rb^{n^3}$ be the vector with elements $(\sum_{r=1}^R u_{ilr} v_{ljr} w_{ijr} - 1)$ for $(i,j,l) \in [n]^3$, i.e., containing the elements of $(\Ye - \Xe)$ in positions for which $\Xe$ has an entry 1 (the element order in $\zbf$ is irrelevant). Note that
	\begin{equation} \label{eq:epsilon-bound}
		|\kappa| = \bigg| n^{-3} \sum_{i=1}^{n^3} z_i \bigg| \leq n^{-3} \|\zbf\|_1 \leq n^{-3} \sqrt{n} \|\zbf\|
		\leq n^{-5/2} \|\Ye - \Xe\|,
	\end{equation} 
	where the first equality follows from \eqref{eq:epsilon-def}, the first inequality follows from the triangle inequality and the definition of the 1-norm, and the second inequality is a well known relation (see e.g.\ Equation~(2.2.5) in \cite{golub2013}). 
	Now, note that 
	\begin{equation} \label{eq:Ye-bound}
		\|\Ye\| \leq \|\Ye - \Xe\| + \|\Xe\| = \|\Ye - \Xe\| + n^{3/2}.
	\end{equation}
	Combining \eqref{eq:epsilon-bound}, \eqref{eq:Ye-bound}, the fact that $|\eta| \leq 2 |\kappa|$ when $|\kappa| \leq 1/2$, and the definition of $\tau$ gives us the desired bound.
\end{proof}

The upper bound $\mu^2 \tau$ in Proposition~\ref{prop:guarantee} (i) is also an upper bound to the $\sup$ term on the right hand side of the inequality in (iii) of the same proposition. Proposition~\ref{prop:constant-bound} shows that the size of the additional constant $|\eta| \mu^2 \|\Ye\|$ is not much larger than the bound that we already have on this $\sup$ term, and that it will be smaller than that bound if e.g.\ $n \geq 3$ and $\tau < 3^{3/2}/2$.

As is clear from the proof of Proposition~\ref{prop:expectation}, the constant $\kappa$ in \eqref{eq:epsilon-def} is used to rescale the computation in \eqref{eq:f-hat-def} so that the output of $\hat{f}$ is correct in expectation. It is important to note that $\kappa$, which can be both positive and negative, is \emph{not} a measure of error in the algorithm. Indeed, setting $R=1$ and all elements of $\Ue$, $\Ve$ and $\We$ to 1 would result in $\kappa = 0$, but this would clearly be a very poor algorithm. Although the corresponding randomized algorithm $\hat{f}$ would be correct in expectation, the error guarantees in Proposition~\ref{prop:guarantee} (i) and (ii) would be very poor, since $\tau$ would be large.

\subsubsection{A recursive algorithm for approximate bilinear computation} \label{sec:recursive-algorithm}

We can extend the result in Proposition~\ref{prop:expectation} to a recursive version of $\hat{f}$. We denote the recursive algorithm with $Q$ recursions for multiplication of $m n^Q \times m n^Q$ matrices by $\hat{F}^{(Q)}$. Let $\{s_i^{(q)}(j)\}_{(i,j,q) \in [3] \times [n] \times [Q]}$ be a collection of i.i.d.\ Rademacher random variables, and let $\pi_i^{(q)} : [n] \rightarrow [n]$, $(i,q) \in [3] \times [Q]$, be independent random permutation functions, each satisfying $(\forall (j,k) \in [n]^2)$ $\Pb[\pi_i^{(q)}(j) = k] = 1/n$. Let $\hat{F}^{(1)}$ be defined exactly as $\hat{f}$ in \eqref{eq:f-hat-def} but based on the random variables $\{s_i^{(1)}(j)\}_{(i,j) \in [3] \times [n]}$ and $\{\pi_i^{(1)}\}_{i \in [3]}$ and define $\hat{F}^{(q)} : \Rb^{m n^q \times m n^q} \times \Rb^{m n^q \times m n^q} \rightarrow \Rb^{m n^q \times m n^q}$, $q \in \{2, 3, \ldots, Q\}$, recursively via
\begin{equation} \label{eq:recursive-random-formula}
	(\hat{F}^{(q)}(\Abf, \Bbf))_{ij} \defeq (1-\kappa)^{-1} s_1^{(q)}(i) s_3^{(q)}(j)
	\sum_{r=1}^R w_{\pi_1^{(q)}(i) \pi_3^{(q)}(j) r} \hat{F}^{(q-1)} \( \Abf^{(q)}, \Bbf^{(q)} \),
\end{equation}
where
\begin{equation}
\begin{aligned}
	&\Abf^{(q)} \defeq \sum_{k,l=1}^n u_{\pi_1^{(q)}(k) \pi_2^{(q)}(l) r} s_1^{(q)}(k) s_2^{(q)}(l) \Abf_{kl}, \\
	&\Bbf^{(q)} \defeq \sum_{k',l'=1}^n v_{\pi_2^{(q)}(k') \pi_3^{(q)}(l') r} s_2^{(q)}(k') s_3^{(q)}(l') \Bbf_{k'l'},
\end{aligned}
\end{equation}
and $(\hat{F}^{(q)}(\Abf, \Bbf))_{ij}$ is the subblock of size $m n^{q-1}\times m n^{q-1}$ on position $(i,j)$. If $\Abf$ and $\Bbf$ are of size $p \times p$ but there is no integer $m$ such that $p = m n^Q$, e.g.\ if $p$ is prime, then we can simply pad the matrices appropriately \cite{huss-lederman1996}.  If the recursive formula \eqref{eq:recursive-random-formula} is used $Q$ times to compute the product of two $N \times N$ matrices, where $N \defeq m n^Q$, and $\hat{F}^{(0)} (\Abf^{(0)}, \Bbf^{(0)}) = \Abf^{(0)} \Bbf^{(0)}$ is computed via the standard matrix multiplication formula, the asymptotic cost of the recursive algorithm is $O(m^{3-\log_n R} N^{\log_n R})$. This is the same leading order complexity as the corresponding computation without any randomization. For example, if we insert $m = 1$, $n = 2$ and $R = 7$, we recover the asymptotic cost of Strassen's algorithm: $O(N^{\log_2 7}) \approx O(N^{2.81})$.
\begin{prop} \label{prop:expectation-recursive}
	For any positive integer $Q$ and for all $\Abf, \Bbf \in \Rb^{m n^Q \times m n^Q}$ we have $\Eb[\hat{F}^{(Q)} (\Abf, \Bbf)] = \Abf \Bbf$.
\end{prop}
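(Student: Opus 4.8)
The plan is to prove the statement by induction on $Q$. The base case $Q = 1$ is immediate: by construction $\hat{F}^{(1)}$ coincides with $\hat{f}$ from \eqref{eq:f-hat-def}, so Proposition~\ref{prop:expectation} gives $\Eb[\hat{F}^{(1)}(\Abf, \Bbf)] = \Abf\Bbf$. For the inductive step I assume the claim holds for $Q-1$, i.e.\ $\Eb[\hat{F}^{(Q-1)}(\Abf', \Bbf')] = \Abf' \Bbf'$ for every deterministic pair of matrices $\Abf', \Bbf'$ of size $mn^{Q-1} \times mn^{Q-1}$, and I aim to establish it for $Q$.

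The key structural observation is that the random variables driving $\hat{F}^{(Q-1)}$ are exactly those at levels $1, \ldots, Q-1$, which are mutually independent of the level-$Q$ variables $\{s_i^{(Q)}(j)\}$ and $\{\pi_i^{(Q)}\}$. I would therefore condition on the $\sigma$-algebra $\Fc_Q$ generated by the level-$Q$ variables. Writing $\Abf^{(Q)}_r$ and $\Bbf^{(Q)}_r$ for the (now $\Fc_Q$-measurable, hence frozen) input matrices associated with index $r$ in \eqref{eq:recursive-random-formula}, linearity of the conditional expectation lets me pull the scalar factors $s_1^{(Q)}(i) s_3^{(Q)}(j)$ and $w_{\pi_1^{(Q)}(i)\pi_3^{(Q)}(j)r}$ outside, leaving $\Eb[\hat{F}^{(Q-1)}(\Abf^{(Q)}_r, \Bbf^{(Q)}_r) \mid \Fc_Q]$ for each $r$. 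Because the lower-level randomness is independent of $\Fc_Q$ and the arguments are $\Fc_Q$-measurable constants, the induction hypothesis applies conditionally and replaces this term by $\Abf^{(Q)}_r \Bbf^{(Q)}_r$.

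Substituting the explicit definitions of $\Abf^{(Q)}_r$ and $\Bbf^{(Q)}_r$ then expands $\sum_r w_{\pi_1^{(Q)}(i)\pi_3^{(Q)}(j)r}\, \Abf^{(Q)}_r \Bbf^{(Q)}_r$ into a quadruple sum over $(k,l,k',l')$ carrying the sign factors $s_1^{(Q)}(k) s_2^{(Q)}(l) s_2^{(Q)}(k') s_3^{(Q)}(l')$ and the coefficient product $\sum_r u_{\pi_1^{(Q)}(k)\pi_2^{(Q)}(l) r} v_{\pi_2^{(Q)}(k')\pi_3^{(Q)}(l') r} w_{\pi_1^{(Q)}(i)\pi_3^{(Q)}(j) r}$. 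I would note that the resulting expression for $\Eb[(\hat{F}^{(Q)}(\Abf, \Bbf))_{ij} \mid \Fc_Q]$ is formally identical to the right-hand side of \eqref{eq:C-hat}, with the single-level Rademacher and permutation variables replaced by their level-$Q$ counterparts. Taking the outer expectation over $\Fc_Q$ therefore reduces verbatim to the computation already carried out in the proof of Proposition~\ref{prop:expectation}: the sign variables kill every term except $k = i$, $l = k'$, $j = l'$, and \eqref{eq:epsilon-def} supplies the factor $(1-\kappa)$ that cancels the $(1-\kappa)^{-1}$ prefactor, yielding $\sum_l \Abf_{il}\Bbf_{lj} = (\Abf\Bbf)_{ij}$.

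The main point requiring care is the conditional application of the induction hypothesis: the inner arguments $\Abf^{(Q)}_r, \Bbf^{(Q)}_r$ are themselves random, so one must justify that freezing them via $\Fc_Q$ and invoking cross-level independence is legitimate (the standard ``take out what is known'' property of conditional expectation). A related subtlety is that the same level-$(Q-1)$ randomness is reused across all $R$ recursive calls, so these calls are not independent; this is harmless here because only first moments enter and linearity of expectation requires no independence among the summands. Once these points are addressed, the remainder is the bookkeeping already done for the $Q = 1$ case.
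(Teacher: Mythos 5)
Your proposal is correct and follows essentially the same route as the paper's proof: induction on $Q$ with base case from Proposition~\ref{prop:expectation}, conditioning on the $\sigma$-algebra generated by the top-level sign and permutation variables, applying the induction hypothesis conditionally via cross-level independence (the paper cites the smoothing and independence properties from \cite{resnick2014}), and then reducing the outer expectation to the computation in the proof of Proposition~\ref{prop:expectation}. Your explicit remark that the reuse of lower-level randomness across the $R$ recursive calls is harmless because only first moments and linearity are involved is a point the paper leaves implicit, but it does not alter the argument.
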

\begin{proof}
	Note that the claim is true for $Q = 1$ due to Proposition~\ref{prop:expectation}. Now assume it is true for some $Q \geq 1$. We will show that it is also true for $Q + 1$, i.e., that $\Eb[\hat{F}^{(Q+1)}(\Abf, \Bbf)] = \Abf \Bbf$ for $\Abf, \Bbf \in \Rb^{m n^{Q+1} \times m n^{Q+1}}$. Let $\Sc$ denote the $\sigma$-algebra generated by the random variables $\{s_i^{(Q+1)}(j)\}_{(i,j) \in [3] \times [n]}$ and $\{\pi_i^{(Q+1)}(j)\}_{(i,j) \in [3] \times [n]}$, and let $\Eb_{\Sc}[ \; \cdot \; ] \defeq \Eb[ \; \cdot \mid \Sc]$. Then
	\begin{equation}
		\Eb [(\hat{F}^{(Q+1)} (\Abf, \Bbf))_{ij} ] = \Eb \[ \Eb_{\Sc}[ (\hat{F}^{(Q+1)} (\Abf, \Bbf))_{ij} ] \]
	\end{equation}
	due to the smoothing property of expectation (property 10 in \cite[p.~348]{resnick2014}),
	\begin{equation}
	\begin{aligned}
		&= \Eb \bigg [ (1-\kappa)^{-1} s_1^{(Q+1)}(i) s_3^{(Q+1)} (j) \sum_{r=1}^R w_{\pi_1^{(Q+1)}(i) \pi_3^{(Q+1)}(j) r} \\
		&\;\;\;\; \times \Eb_{\Sc} \Big [ \hat{F}^{(Q)} \Big(\sum_{k,l=1}^n u_{\pi_1^{(Q+1)}(k) \pi_2^{(Q+1)}(l) r} s_1^{(Q+1)}(k) s_2^{(Q+1)}(l) \Abf_{kl}, \\
		&\;\;\;\; \sum_{k',l'=1}^n v_{\pi_2^{(Q+1)}(k') \pi_3^{(Q+1)}(l') r} s_2^{(Q+1)}(k') s_3^{(Q+1)}(l') \Bbf_{k'l'} \Big )  \Big ] \bigg] \\
	\end{aligned}
	\end{equation}
	since each $s_i^{(Q+1)}(j)$ and $\pi_i^{(Q+1)}(j)$ is $\Sc$-measurable,
	\begin{equation}
	\begin{aligned}
		&= \Eb \bigg [ (1-\kappa)^{-1} s_1^{(Q+1)}(i) s_3^{(Q+1)} (j) \sum_{r=1}^R w_{\pi_1^{(Q+1)}(i) \pi_3^{(Q+1)}(j) r} \\
		& \;\;\;\; \times  \Big (\sum_{k,l=1}^n u_{\pi_1^{(Q+1)}(k) \pi_2^{(Q+1)}(l) r} s_1^{(Q+1)}(k) s_2^{(Q+1)}(l) \Abf_{kl} \Big ) \\
		& \;\;\;\; \times \Big ( \sum_{k',l'=1}^n v_{\pi_2^{(Q+1)}(k') \pi_3^{(Q+1)}(l') r} s_2^{(Q+1)}(k') s_3^{(Q+1)}(l') \Bbf_{k'l'} \Big)  \bigg] \\
	\end{aligned}
	\end{equation}
	due to the induction hypothesis and since all random variables $\{s_i^{(q)}(j)\}_{(i,j,q) \in [3] \times [n] \times [Q]}$ and $\{\pi_i^{(q)}(j)\}_{(i,j,q) \in [3] \times [n] \times [Q]}$ are independent of $\Sc$ (and using property 12 in \cite[pp.~349--350]{resnick2014}),
	\begin{equation}
	\begin{aligned}
		& = \Eb \bigg[ (1-\kappa)^{-1} s_1^{(Q+1)}(i) s_3^{(Q+1)} (j) \sum_{k,l=1}^n \sum_{k',l'=1}^n s_1^{(Q+1)}(k) s_2^{(Q+1)}(l) s_2^{(Q+1)}(k') s_3^{(Q+1)}(l') \Abf_{kl} \Bbf_{k'l'} \\
		& \;\;\;\; \times \sum_{r=1}^R u_{\pi_1^{(Q+1)}(k) \pi_2^{(Q+1)}(l) r} v_{\pi_2^{(Q+1)}(k') \pi_3^{(Q+1)}(l') r} w_{\pi_1^{(Q+1)}(i) \pi_3^{(Q+1)}(j) r} \bigg] \\
	\end{aligned}
	\end{equation}
	by reordering the terms,
	\begin{equation}
		= \sum_{l=1}^n \Abf_{il} \Bbf_{lj} = (\Abf \Bbf)_{ij}
	\end{equation}
	which follows by doing the same analysis as in the proof of Proposition~\ref{prop:expectation}. So $\Eb [ \hat{F}^{(Q+1)}(\Abf, \Bbf) ] = \Abf \Bbf$. The claim in Proposition~\ref{prop:expectation-recursive} now follows by induction.
\end{proof}

\subsection{In floating point arithmetic} \label{sec:finite-arithmetic}

In floating point arithmetic, we cannot achieve guarantees like those in Propositions~\ref{prop:expectation} and \ref{prop:expectation-recursive}. This is illustrated in the following example.
\begin{example} \label{ex:EBC}
Let $g : \Rb^{mn \times mn} \times \Rb^{mn \times mn} \rightarrow \Rb^{mn \times mn}$ be a function that computes matrix multiplication according to some EBC in floating point arithmetic. Let $\hat{g}$ be defined analogously to $\hat{f}$ in \eqref{eq:f-hat-def}, but in terms of $g$ instead of $f$. Since $g$ is exact, we have $\kappa = 0$. We will use $\hat{G}^{(Q)}$ to denote the recursive version of $\hat{g}$, defined analogously to $\hat{F}^{(Q)}$.
We will use $G^{(Q)}$ to denote the recursive version of $g$, defined analogously to $\hat{G}^{(Q)}$ but with each $s_i^{(q)}(j) = 1$ and each $\pi_i^{(q)}(j) = j$, i.e., with no randomness involved so that $G^{(Q)}$ is deterministic. We consider the same setup as in Section 2.7.10 of \cite{golub2013}: Let 
\begin{equation}
	\Abf = \Bbf = 
	\begin{bmatrix}
	0.99 & 0.0010 \\
	0.0010 & 0.99 
	\end{bmatrix},
\end{equation}
and suppose we are computing on a machine using 2-digit floating point arithmetic. Taking $g$ to be Strassen's algorithm computed on this machine, we get $\|g(\Abf, \Bbf) - \Abf \Bbf\| \approx 0.0286$. In the definition of $\hat{g}$, there are a total of 64 possible sign functions and 8 possible permutation functions. Each combination of these has the same probability of occurring, so we can readily compute $\Eb[\hat{g}(\Abf,\Bbf)]$. Doing this, we find that $\|\Eb[\hat{g}(\Abf, \Bbf)] - \Abf \Bbf\| \approx 0.0024$. So we cannot guarantee $\Eb[\hat{g}(\Abf, \Bbf)] = \Abf \Bbf$ in this finite precision setting. We are not even guaranteed to have $\|\Eb[\hat{g}(\Abf, \Bbf)] - \Abf \Bbf\| \leq \|g(\Abf, \Bbf) - \Abf \Bbf\|$: Let $\Abf^{(r)}$ and $\Bbf^{(r)}$ be the same as $\Abf$ and $\Bbf$, respectively, but with the order of the columns reversed, i.e., $\Abf^{(r)} \defeq \rowvec{\abf_{:2} & \abf_{:1}}$ and $\Bbf^{(r)} \defeq \rowvec{\bbf_{:2} & \bbf_{:1}}$. We then get $\|g(\Abf^{(r)}, \Bbf^{(r)}) - \Abf^{(r)} \Bbf^{(r)}\| \approx 0.0001$, but $\|\Eb[\hat{g}(\Abf^{(r)}, \Bbf^{(r)})] - \Abf^{(r)} \Bbf^{(r)}\| \approx 0.0024$. Despite this, randomization seems to work remarkably well in practice when an exact algorithm is computed in finite precision arithmetic, as we will see in the numerical experiments.
\end{example}

We now consider ABCs (which include EBCs as a special case) in finite precision arithmetic. As in \citep{golub2013}, we use $\fl(x)$ to denote the representation of $x \in \Rb$ as a floating point number, and $\fl(f(x))$ to denote the result of computing $f(x)$ in floating point arithmetic. When computing the latter, the algorithm used to compute $f$ matters. We make the standard assumption that $\fl(x \op y) = (x \op y) (1+\Delta)$ where $x$ and $y$ are floating point numbers, $\op$ is scalar addition, subtraction or multiplication, and $|\Delta| \leq \varepsilon_\text{machine}$, where $\varepsilon_\text{machine}$ is the machine epsilon or unit roundoff. For numerical summations, e.g.\ $\fl(\sum_{n=1}^N x_n)$, we assume that the summation is simply done sequentially, e.g.\
\begin{equation}
	\fl\Big(\sum_{n=1}^3 x_n\Big) = \fl\Big(\fl\big(\fl(x_1) + \fl(x_2)\big) + \fl(x_3)\Big).
\end{equation}
We first present a series of results in Propositions~\ref{prop:numerical-error-scalar-case}--\ref{prop:total-error-matrix-case-randomized} with the goal of understanding how algorithmic \emph{and} numerical error combined impact the deterministic and  randomized ABCs described by $f$ and $\hat{f}$, respectively. Throughout these results, we make the reasonable assumption that the input matrices $\Abf$ and $\Bbf$, as well as the tensors $\Ue$, $\Ve$ and $\We$ defining the BC, are already stored in floating point format, so that e.g.\ $\fl(\Abf) = \Abf$. Proposition~\ref{prop:numerical-error-scalar-case} provides an upper bound on the numerical error for the (approximate or exact) BC $f$ defined as in \eqref{eq:bilinear-computation}.
\begin{prop} \label{prop:numerical-error-scalar-case}
	Suppose $(4n+R-1) \varepsilon_\textup{machine} \leq 0.01$. For $\Abf, \Bbf \in B_\mu^{(n)}$ and an $(n,\tau)$-ABC $f$ computed according to \eqref{eq:bilinear-computation-block}, we have
	\begin{equation}
	\|\fl(f(\Abf, \Bbf)) - f(\Abf, \Bbf) \| \leq  1.01 (4n + R - 1)\sqrt{R} \varepsilon_\textup{machine} \mu^2 \sqrt{\sum_{r=1}^R \|\Ubf_{::r}\|^2 \|\Vbf_{::r}\|^2 \|\Wbf_{::r}\|^2},
	\end{equation}
	where e.g.\ $\Ubf_{::r} \in \Rb^{n \times n}$ is the $r$th frontal slice of $\Ue$, so that $\|\Ubf_{::r}\|^2 = \sum_{k,l} u_{klr}^2$.
\end{prop}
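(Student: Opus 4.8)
The plan is to run a standard backward-error analysis of the three computational stages implicit in \eqref{eq:bilinear-computation}, and then to convert the resulting entrywise bound into a Frobenius-norm bound. Throughout I would use the model $\fl(x\,\op\,y)=(x\,\op\,y)(1+\Delta)$ with $\abs{\Delta}\le\varepsilon_\textup{machine}$, together with the standard shorthand $\gamma_k \defeq k\varepsilon_\textup{machine}/(1-k\varepsilon_\textup{machine})$ and the composition rule $(1+\gamma_j)(1+\gamma_k)\le 1+\gamma_{j+k}$, which keeps the accumulation of rounding factors manageable (cf.\ \cite{golub2013}).

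\emph{Forming the contractions.} For each $r$ I would analyze the two linear combinations $L_r\defeq\sum_{k,l}u_{klr}a_{kl}$ and $M_r\defeq\sum_{k',l'}v_{k'l'r}b_{k'l'}$. Organizing each double summation as an outer sum over one index of inner length-$n$ contractions over the other, the inner contractions contribute a factor $\gamma_n$ and the outer sum a factor $\gamma_{n-1}$, so that $\fl(L_r)=\sum_{k,l}u_{klr}a_{kl}(1+\theta_{klr})$ with $\abs{\theta_{klr}}\le\gamma_{2n-1}$, and likewise for $M_r$. \emph{Assembling the output.} Since $\fl(c_{ij})=\fl\big(\sum_r w_{ijr}\fl(L_r)\fl(M_r)\big)$ adds two multiplications per term (a factor $\gamma_2$) and a length-$R$ summation (a factor $\gamma_{R-1}$), composing all factors yields $\fl(c_{ij})=\sum_r w_{ijr}L_rM_r(1+\Theta_{ijr})$ with $\abs{\Theta_{ijr}}\le\gamma_{(2n-1)+(2n-1)+2+(R-1)}=\gamma_{4n+R-1}$. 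This gives the entrywise bound
\begin{equation*}
\abs{\fl(c_{ij})-c_{ij}} \le \gamma_{4n+R-1}\sum_{r=1}^R \abs{w_{ijr}}\,\abs{L_r}\,\abs{M_r},
\end{equation*}
and Cauchy--Schwarz on each contraction gives $\abs{L_r}\le\norm{\Ubf_{::r}}\norm{\Abf}$ and $\abs{M_r}\le\norm{\Vbf_{::r}}\norm{\Bbf}$.

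\emph{Passing to the Frobenius norm.} I would square the entrywise bound and sum over $(i,j)$, then apply Cauchy--Schwarz in the form $(\sum_{r=1}^R x_r)^2\le R\sum_{r=1}^R x_r^2$ with $x_r=\abs{w_{ijr}}\norm{\Ubf_{::r}}\norm{\Vbf_{::r}}$; this step is precisely what produces the $\sqrt{R}$ factor. Exchanging the order of summation collapses $\sum_{i,j}\sum_r w_{ijr}^2\norm{\Ubf_{::r}}^2\norm{\Vbf_{::r}}^2$ into $\sum_r \norm{\Ubf_{::r}}^2\norm{\Vbf_{::r}}^2\norm{\Wbf_{::r}}^2$ via $\sum_{i,j}w_{ijr}^2=\norm{\Wbf_{::r}}^2$, and factoring out $\norm{\Abf}^2\norm{\Bbf}^2$ leaves the claimed form under the square root. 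Finally, the hypothesis $(4n+R-1)\varepsilon_\textup{machine}\le 0.01$ lets me bound $\gamma_{4n+R-1}\le (4n+R-1)\varepsilon_\textup{machine}/0.99\le 1.01(4n+R-1)\varepsilon_\textup{machine}$, giving the stated constant.

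The main obstacle is the contraction stage: obtaining the accumulated factor $\gamma_{4n+R-1}$ exactly, rather than something larger, hinges on how the double sums are evaluated, since a naive single-accumulator sweep over all $n^2$ products would instead give $\gamma_{n^2}$. Thus the $2n-1$ count relies on performing the inner and outer summations separately and on disciplined use of the $\gamma$-composition inequalities to merge the per-stage factors. By comparison, the Frobenius-norm step is routine, the only nonobvious move being the $(\sum_r x_r)^2\le R\sum_r x_r^2$ inequality responsible for the $\sqrt{R}$.
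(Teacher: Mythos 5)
Your error-accumulation count ($2n-1$ rounding factors per contraction via separate inner/outer summation, plus $2$ for the multiplications and $R-1$ for the sum over $r$, totalling $4n+R-1$) and your norm assembly (Cauchy--Schwarz per contraction, then $(\sum_{r} x_r)^2 \le R \sum_{r} x_r^2$ for the $\sqrt{R}$, then collapsing $\sum_{i,j} w_{ijr}^2 = \|\Wbf_{::r}\|^2$) are exactly the paper's route. However, there is a genuine flaw in your middle step: the identity $\fl(c_{ij}) = \sum_r w_{ijr} L_r M_r (1+\Theta_{ijr})$ with $|\Theta_{ijr}| \le \gamma_{4n+R-1}$ is false, and so is the entrywise bound $|\fl(c_{ij}) - c_{ij}| \le \gamma_{4n+R-1} \sum_r |w_{ijr}|\,|L_r|\,|M_r|$ you derive from it. Floating-point summation admits a \emph{backward} error on each summand, $\fl(L_r) = \sum_{k,l} u_{klr} a_{kl} (1+\theta_{klr})$ with a distinct $\theta_{klr}$ per term, but not a forward relative error on the sum itself: under cancellation (say $u_{k_1 l_1 r} a_{k_1 l_1} = 1$ and $u_{k_2 l_2 r} a_{k_2 l_2} = -1+\delta$ with $\delta$ tiny) the relative error of $\fl(L_r)$ with respect to $L_r$ is of order $\varepsilon_\text{machine}/\delta$, which is unbounded --- and cancellation is precisely the regime this proposition is meant to cover. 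This is why the paper keeps the perturbation attached to each fully expanded term, writing $\fl(f(\Abf,\Bbf)_{ij}) = \sum_{k,l}\sum_{k',l'} a_{kl} b_{k'l'} \sum_r u_{klr} v_{k'l'r} w_{ijr} (1+\varepsilon_{klk'l'r})$ with one $\varepsilon_{klk'l'r}$ per term, and only then applies Cauchy--Schwarz. Your argument is repaired by the same move: replace your entrywise bound by $|\fl(c_{ij}) - c_{ij}| \le \gamma_{4n+R-1} \sum_r |w_{ijr}| \big(\sum_{k,l} |u_{klr} a_{kl}|\big) \big(\sum_{k',l'} |v_{k'l'r} b_{k'l'}|\big)$; since Cauchy--Schwarz gives $\sum_{k,l} |u_{klr} a_{kl}| \le \|\Ubf_{::r}\| \|\Abf\|$ just as it did for $|L_r|$, every subsequent step of your proof goes through verbatim with the same constants. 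The gap is real but local and costs one line to close.

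A second, minor slip concerns the constant: with $\gamma_k \defeq k\varepsilon_\text{machine}/(1-k\varepsilon_\text{machine})$, the hypothesis $(4n+R-1)\varepsilon_\text{machine} \le 0.01$ only yields $\gamma_{4n+R-1} \le (4n+R-1)\varepsilon_\text{machine}/0.99$, and $1/0.99 \approx 1.0101 > 1.01$, so your final inequality fails marginally at the boundary. The paper avoids this by invoking Lemma~2.7.1 of \cite{golub2013} directly, which states that a product of at most $4n+R-1$ factors $(1+\epsilon_i)$ with $|\epsilon_i| \le \varepsilon_\text{machine}$ equals $1+\alpha$ with $|\alpha| \le 1.01\,(4n+R-1)\,\varepsilon_\text{machine}$ under exactly this hypothesis; citing that lemma in place of the $\gamma$-notation detour (or slightly tightening the hypothesis) recovers the stated constant.
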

\begin{proof}
	Throughout this proof, constants $\gamma$, $\theta$, $\lambda$, $\tilde{\gamma}$, $\tilde{\theta}$, $\tilde{\lambda}$, $\phi$ and $\psi$ with subscripts are real numbers of magnitude less than or equal to $\varepsilon_\text{machine}$. Define
	\begin{equation}
	s_{kpr} \defeq \fl \bigg( \sum_{l=1}^p u_{klr} a_{kl} \bigg).
	\end{equation}
	Then
	\begin{equation}
	\begin{aligned}
	&s_{k1r} = \fl(u_{k1r} a_{k1}) = u_{k1r} a_{k1} (1 + \gamma_{k1r}), \\
	&s_{k2r} = \fl(s_{k1r} + u_{k2r} a_{k2}) = (s_{k1r} + u_{k2r} a_{k2} (1 + \gamma_{k2r}))(1+\theta_{k2r}),
	\end{aligned}
	\end{equation}
	etc. More generally,
	\begin{equation}
	s_{kpr} = \sum_{l=1}^p u_{klr} a_{kl} (1 + \gamma_{klr}) \prod_{\alpha = l}^p (1+\theta_{k\alpha r}), \;\;\;\; \theta_{k 1 r} \defeq 0.
	\end{equation}
	Next, define
	\begin{equation}
	\hat{s}_{pr} \defeq \fl \bigg( \sum_{k=1}^p \sum_{l=1}^n u_{klr} a_{kl} \bigg).
	\end{equation}
	Then
	\begin{equation}
	\begin{aligned}
	&\hat{s}_{1r} = \fl \bigg( \sum_{l=1}^n u_{1lr} a_{1l} \bigg) = s_{1nr}, \\
	&\hat{s}_{2r} = \fl \bigg( \hat{s}_{1r} + \fl\Big( \sum_{l=1}^n u_{2lr} a_{2l} \Big)  \bigg) = (s_{1nr} + s_{2nr}) (1 + \lambda_{2r}),
	\end{aligned}
	\end{equation}
	etc. More generally,
	\begin{equation}
	\hat{s}_{pr} = \sum_{k=1}^p s_{knr} \prod_{\alpha=k}^p (1+\lambda_{\alpha r}), \;\;\;\; \lambda_{1 r} \defeq 0.
	\end{equation}
	Consequently,
	\begin{equation}
	P_r \defeq \fl \bigg( \sum_{k=1}^n \sum_{l=1}^n u_{klr} a_{kl} \bigg) = \hat{s}_{nr} = \sum_{k=1}^n \sum_{l=1}^n u_{klr} a_{kl} (1 + \gamma_{klr}) \bigg( \prod_{\alpha = l}^n (1+\theta_{k\alpha r}) \bigg) \bigg( \prod_{\alpha=k}^n (1+\lambda_{\alpha r}) \bigg).
	\end{equation}
	Similarly,
	\begin{equation}
	Q_r \defeq \fl \bigg( \sum_{k'=1}^n \sum_{l'=1}^n v_{k'l'r} b_{k'l'} \bigg) = \sum_{k'=1}^n \sum_{l'=1}^n v_{k'l'r} b_{k'l'} (1 + \tilde{\gamma}_{k'l'r}) \bigg( \prod_{\alpha = l'}^n (1+\tilde{\theta}_{k'\alpha r}) \bigg) \bigg( \prod_{\alpha=k'}^n  (1+\tilde{\lambda}_{\alpha r}) \bigg).
	\end{equation}
	Now, define
	\begin{equation}
	z_{p} \defeq \fl \bigg( \sum_{r=1}^p w_{ijr} \bigg(\sum_{k,l = 1}^n u_{k l r} a_{kl}\bigg) \bigg(\sum_{k',l' = 1}^n v_{k'l'r} b_{k'l'}\bigg) \bigg).
	\end{equation}
	We have
	\begin{equation}
	\begin{aligned}
	z_1 &= \fl \Big( w_{ij1} \Big(\sum_{k,l = 1}^n u_{k l 1} a_{kl}\Big) \Big(\sum_{k',l' = 1}^n v_{k'l'1} b_{k'l'}\Big) \Big) \\
		&= w_{ij1} P_1 Q_1 (1+\phi_{11}) (1+\phi_{12}), \\
	z_2 &= \fl \bigg(z_1 + \fl \Big( w_{ij2} \Big(\sum_{k,l = 1}^n u_{k l 2} a_{kl}\Big) \Big(\sum_{k',l' = 1}^n v_{k'l'2} b_{k'l'}\Big) \Big)\bigg) \\
		&= (z_1 + w_{ij2} P_2 Q_2 (1+\phi_{21}) (1+\phi_{22})) (1+\psi_{2}),
	\end{aligned}
	\end{equation}
	etc. From this, it follows that
	\begin{equation}
	\fl(f(\Abf, \Bbf)_{ij}) = z_R = \sum_{r=1}^R w_{ijr} P_r Q_r (1+\phi_{r1}) (1+\phi_{r2}) \prod_{\zeta = r}^R (1+\psi_{\zeta}), \;\;\;\; \psi_1 \defeq 0.
	\end{equation}
	Writing this out in full and rearranging, we have
	\begin{equation}
	\begin{aligned}
	&\fl(f(\Abf, \Bbf)_{ij}) = \sum_{k,l=1}^n \sum_{k',l'=1}^n a_{kl} b_{k'l'} \sum_{r=1}^R u_{klr} v_{k'l'r} w_{ijr} (1 + \gamma_{klr}) (1 + \tilde{\gamma}_{k'l'r})  \\
	&\times \bigg( \prod_{\alpha = l}^n (1+\theta_{k\alpha r}) \bigg) \bigg( \prod_{\alpha = l'}^n (1+\tilde{\theta}_{k'\alpha r}) \bigg) \bigg( \prod_{\alpha = k}^n (1+\lambda_{\alpha r}) \bigg) \\
	&\times \bigg( \prod_{\alpha = k'}^n(1+\tilde{\lambda}_{\alpha r}) \bigg) (1+\phi_{r1}) (1+\phi_{r2}) \prod_{\zeta = r}^R (1+\psi_{\zeta}).
	\end{aligned}
	\end{equation}
	Since we have assumed $(4n+R-1) \varepsilon_\text{machine} \leq 0.01$, it follows from Lemma~2.7.1 in \cite{golub2013} that there exist constants $\varepsilon_{klk'l'r}$, for $(k,l,k',l',r) \in [n]^4 \times [R]$, such that $|\varepsilon_{klk'l'r}| \leq 1.01 (4n + R - 1) \varepsilon_\text{machine}$ and
	\begin{equation}
	\begin{aligned}
	\fl(f(\Abf, \Bbf)_{ij}) 
	&= \sum_{k,l=1}^n \sum_{k',l'=1}^n a_{kl} b_{k'l'} \sum_{r=1}^R u_{klr} v_{k'l'r} w_{ijr} (1 + \varepsilon_{klk'l'r}) \\
	&= f(\Abf, \Bbf)_{ij} + \sum_{k,l=1}^n \sum_{k',l'=1}^n a_{kl} b_{k'l'} \sum_{r=1}^R u_{klr} v_{k'l'r} w_{ijr} \varepsilon_{klk'l'r}.
	\end{aligned}
	\end{equation}
	We have
	\begin{equation}
	\begin{aligned}
		&|\fl(f(\Abf, \Bbf)_{ij}) - f(\Abf, \Bbf)_{ij}|^2 = \Big| \sum_{k,l=1}^n \sum_{k',l'=1}^n a_{kl} b_{k'l'} \sum_{r=1}^R u_{klr} v_{k'l'r} w_{ijr} \varepsilon_{klk'l'r} \Big|^2 \\
		&\leq \Big(\sum_{k,l=1}^n \sum_{k',l'=1}^n (a_{kl} b_{k'l'})^2 \Big) \Big( \sum_{k,l=1}^n \sum_{k',l'=1}^n \Big( \sum_{r=1}^R u_{klr} v_{k'l'r} w_{ijr} \varepsilon_{klk'l'r} \Big)^2 \Big) \\
		&\leq \Big(\sum_{k,l=1}^n \sum_{k',l'=1}^n (a_{kl} b_{k'l'})^2 \Big) \Big( \sum_{k,l=1}^n \sum_{k',l'=1}^n \Big( \sum_{r=1}^R (u_{klr} v_{k'l'r} w_{ijr})^2 \Big) \Big( \sum_{r=1}^R \varepsilon_{klk'l'r}^2 \Big) \Big) \\
		&\leq \|\Abf\|^2 \|\Bbf\|^2 R (1.01 (4n + R - 1) \varepsilon_\text{machine})^2 \sum_{r=1}^R \sum_{k,l=1}^n \sum_{k',l'=1}^n (u_{klr} v_{k'l'r} w_{ijr})^2,
	\end{aligned}
	\end{equation}
	where the first and second inequality follows from the Cauchy--Schwarz inequality, and the third inequality follows from the bound on $|\varepsilon_{klk'l'r}|$.
	Summing over all $i,j$, taking a square root, and using that $\|\Abf\|, \|\Bbf\| \leq \mu$, we arrive at 
	\begin{equation}
	\|\fl(f(\Abf, \Bbf)) - f(\Abf, \Bbf) \| \leq 1.01 (4n + R - 1)\sqrt{R} \varepsilon_\text{machine} \mu^2 \sqrt{\sum_{r=1}^R \|\Ubf_{::r}\|^2 \|\Vbf_{::r}\|^2 \|\Wbf_{::r}\|^2}.
	\end{equation}
\end{proof}

Proposition~\ref{prop:numerical-error-matrix-case} generalizes Proposition~\ref{prop:numerical-error-scalar-case} to the case when the input matrices are of size $mn \times mn$.
\begin{prop} \label{prop:numerical-error-matrix-case}
	Suppose $(4n + m + R - 2) \varepsilon_\textup{machine} \leq 1.01$. For $\Abf, \Bbf \in B_\mu^{(mn)}$ and an $(n,\tau)$-ABC $f$ computed according to \eqref{eq:bilinear-computation-block}, we have 
	\begin{equation}
		\|\fl(f(\Abf, \Bbf)) - f(\Abf, \Bbf)\| \leq 1.01 (4n + m + R - 2) \sqrt{R} \varepsilon_\textup{machine} \mu^2 \sqrt{\sum_{r=1}^R \|\Ubf_{::r}\|^2 \|\Vbf_{::r}\|^2 \|\Wbf_{::r}\|^2}.
	\end{equation}
\end{prop}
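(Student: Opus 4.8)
The plan is to mirror the proof of Proposition~\ref{prop:numerical-error-scalar-case} essentially verbatim, the only genuinely new ingredients being the rounding incurred by the $m \times m$ block products and the two extra within-block indices. Fix an output block $(i,j)$ and a within-block position $(a,b)$, and trace the floating point evaluation of
$(f(\Abf,\Bbf)_{ij})_{ab} = \sum_{r=1}^R w_{ijr} \sum_{c=1}^m (P_r)_{ac}(Q_r)_{cb}$,
where $P_r \defeq \sum_{k,l} u_{klr}\Abf_{kl}$ and $Q_r \defeq \sum_{k',l'} v_{k'l'r}\Bbf_{k'l'}$, and where (as in the scalar proof) every summation, including the length-$m$ inner products defining the block products, is carried out sequentially. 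Each entry $(P_r)_{ac}$ and each entry $(Q_r)_{cb}$ is an $n^2$-term nested summation identical in structure to the scalar quantities $P_r,Q_r$, so each carries at most $2n-1$ rounding factors on its worst term. The length-$m$ inner product $\sum_c (P_r)_{ac}(Q_r)_{cb}$ then contributes at most $m$ further factors (one multiplication per $c$ and up to $m-1$ additions), the scaling by $w_{ijr}$ one more, and the outer length-$R$ summation over $r$ at most $R-1$; the worst term thus accumulates at most $(2n-1)+(2n-1)+m+1+(R-1) = 4n+m+R-2$ factors, i.e.\ exactly the scalar count $4n+R-1$ plus the $m-1$ extra factors from the matrix product.

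Next I would invoke Lemma~2.7.1 in \cite{golub2013} to bundle these at most $4n+m+R-2$ factors into a single relative perturbation. Setting $E \defeq 1.01(4n+m+R-2)\varepsilon_\text{machine}$, the hypothesis lets us conclude that there exist constants $\varepsilon^{(ij,ab,c)}_{klk'l'r}$ with $|\varepsilon^{(ij,ab,c)}_{klk'l'r}| \le E$ such that
\begin{equation}
\fl(f(\Abf,\Bbf)_{ij})_{ab} = (f(\Abf,\Bbf)_{ij})_{ab} + \Delta_{ij,ab}, \quad \Delta_{ij,ab} \defeq \sum_{k,l,k',l'}\sum_{c=1}^m (\Abf_{kl})_{ac}(\Bbf_{k'l'})_{cb}\sum_{r=1}^R u_{klr}v_{k'l'r}w_{ijr}\,\varepsilon^{(ij,ab,c)}_{klk'l'r}.
\end{equation}
The goal is then to bound $\|\fl(f(\Abf,\Bbf)) - f(\Abf,\Bbf)\|^2 = \sum_{i,j}\sum_{a,b}|\Delta_{ij,ab}|^2$ by the square of the claimed right-hand side.

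The delicate point---the one place where the argument departs from a verbatim copy of the scalar proof---is the treatment of the contraction index $c$. Folding $c$ into the same Cauchy--Schwarz step that separates the data $(\Abf_{kl})_{ac}(\Bbf_{k'l'})_{cb}$ from the coefficients $u_{klr}v_{k'l'r}w_{ijr}\varepsilon^{(ij,ab,c)}_{klk'l'r}$ would produce a spurious factor of $m$ (hence $\sqrt m$ after the square root), which is absent from the statement. Instead I would first pull out $E$ by the triangle inequality and then sum over $c$ using the ``half'' Cauchy--Schwarz $\sum_c |(\Abf_{kl})_{ac}|\,|(\Bbf_{k'l'})_{cb}| \le (\sum_c (\Abf_{kl})_{ac}^2)^{1/2}(\sum_c (\Bbf_{k'l'})_{cb}^2)^{1/2}$, which pairs the $A$-entries against the $B$-entries and keeps the dependence on $m$ inside the constant $E$; this is exactly the mechanism by which the Frobenius error of ordinary matrix multiplication avoids a $\sqrt m$. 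Writing $\tilde a_{kl,a} \defeq (\sum_c (\Abf_{kl})_{ac}^2)^{1/2}$ and $\tilde b_{k'l',b} \defeq (\sum_c (\Bbf_{k'l'})_{cb}^2)^{1/2}$ for the resulting row and column $2$-norms, I am left with $|\Delta_{ij,ab}| \le E \sum_{k,l,k',l'} \tilde a_{kl,a}\tilde b_{k'l',b}\sum_r |u_{klr}v_{k'l'r}w_{ijr}|$, which is structurally identical to the scalar estimate.

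From here the remaining steps replicate the scalar proof: applying Cauchy--Schwarz over the pair index $(k,l,k',l')$ and then over $r$ yields
\begin{equation}
|\Delta_{ij,ab}|^2 \le E^2 R \Big(\sum_{k,l}\tilde a_{kl,a}^2\Big)\Big(\sum_{k',l'}\tilde b_{k'l',b}^2\Big)\sum_{k,l,k',l',r}(u_{klr}v_{k'l'r}w_{ijr})^2.
\end{equation}
Summing over $a,b,i,j$ and using $\sum_a\sum_{k,l}\tilde a_{kl,a}^2 = \sum_{k,l}\|\Abf_{kl}\|^2 = \|\Abf\|^2$, the analogous identity for $\Bbf$, and $\sum_{i,j}\sum_{k,l,k',l',r}(u_{klr}v_{k'l'r}w_{ijr})^2 = \sum_r \|\Ubf_{::r}\|^2\|\Vbf_{::r}\|^2\|\Wbf_{::r}\|^2$ collapses everything to $E^2 R \|\Abf\|^2\|\Bbf\|^2 \sum_r\|\Ubf_{::r}\|^2\|\Vbf_{::r}\|^2\|\Wbf_{::r}\|^2$; taking a square root gives the claim. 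The main obstacle is therefore not the bookkeeping but recognizing that the new index $c$ must be handled by a row-against-column Cauchy--Schwarz rather than by the main data-against-coefficient split, so that the block size $m$ enters only through the rounding constant $E$ and not as an extra multiplicative $\sqrt m$.
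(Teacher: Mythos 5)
Your proposal is correct and takes essentially the same route as the paper's proof: the same rounding-factor count (the scalar $4n+R-1$ plus $m-1$ from the length-$m$ inner products) bundled via Lemma~2.7.1 of \cite{golub2013} into a uniform perturbation of size $1.01(4n+m+R-2)\varepsilon_\textup{machine}$, followed by the same key Cauchy--Schwarz step on the contraction index that pairs the rows of $\Abf_{kl}$ against the columns of $\Bbf_{k'l'}$ so that $m$ enters only through the constant and not as an extra $\sqrt{m}$. The only difference is the immaterial order of the Cauchy--Schwarz applications (you handle the contraction index $c$ first, the paper applies Cauchy--Schwarz over $r$, then over $(k,l,k',l')$, then over $z$), and the final collection of terms is identical.
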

\begin{proof}
	Note that all computations when calculating $\sum_{k,l} u_{klr} \Abf_{kl}$ and $\sum_{k',l'} v_{k'l'r} \Bbf_{k'l'}$ are done elementwise. The error accumulated for each entry will therefore be the same as in the case when $\Abf_{kl}$ and $\Bbf_{k'l'}$ are scalars. More precisely,
	\begin{equation} \label{eq:mat-error}
		\fl \bigg( \sum_{k,l=1}^n u_{klr} \Abf_{kl} \bigg) = \sum_{k,l=1}^n u_{klr} \Abf_{kl} \circledast \Ebf_{klr},
	\end{equation}
	where $\circledast$ is the Hadamard (elementwise) product, and $\Ebf_{klr} \in \Rb^{m \times m}$ is a matrix with entries which are the product of at most $2n-1$ terms of the form $(1+\varepsilon)$ with $|\varepsilon| \leq \varepsilon_\text{machine}$. A similar statement is true for $\fl(\sum_{k',l'=1}^n v_{k'l'r} \Bbf_{k'l'})$. When computing each
	\begin{equation} \label{eq:fl-matrix-computation}
	\fl \bigg( w_{ijr} \bigg(\sum_{k,l=1}^n u_{klr} \Abf_{kl} \bigg) \bigg( \sum_{k',l'=1}^n v_{k'l'r} \Bbf_{k'l'} \bigg) \bigg),
	\end{equation}
	we now have additional error compared to the scalar case due to the inner product computations. In order to avoid cumbersome notation, we do not write out the following computations in full. Using the result in Section~2.7.6 of \cite{golub2013} for the rounding error in inner products, Equation \eqref{eq:mat-error}, Lemma~2.7.1 in \citep{golub2013}, and letting $(f(\Abf, \Bbf)_{ij})_{\alpha \beta}$ denote the element on position $(\alpha,\beta)$ in matrix block $(i,j)$, we can compute
	\begin{equation}
		\fl((f(\Abf, \Bbf)_{ij})_{\alpha \beta}) = (f(\Abf, \Bbf)_{ij})_{\alpha \beta} + \sum_{r=1}^R w_{ijr} \sum_{k,l=1}^n \sum_{k',l'=1}^n \sum_{z=1}^m u_{klr} v_{k'l'r} (\Abf_{kl})_{\alpha z} (\Bbf_{k'l'})_{z \beta} \Theta_{klk'l'r \alpha \beta z},
	\end{equation}
	where each $|\Theta_{klk'l'r \alpha \beta z}| \leq \hat{\theta} \defeq 1.01 (4n + m + R - 2) \varepsilon_\text{machine}$. Rearranging this, we have 
	\begin{equation}
	\begin{aligned}
		&|\fl((f(\Abf, \Bbf)_{ij})_{\alpha \beta}) - (f(\Abf, \Bbf)_{ij})_{\alpha \beta}|^2  \\
		&= \Big| \sum_{r=1}^R \sum_{k,l=1}^n \sum_{k',l'=1}^n \sum_{z=1}^m u_{klr} v_{k'l'r} w_{ijr} (\Abf_{kl})_{\alpha z} (\Bbf_{k'l'})_{z \beta} \Theta_{klk'l'r \alpha \beta z} \Big|^2 \\
		&\leq \hat{\theta}^2 \Big| \sum_{r=1}^R \sum_{k,l=1}^n \sum_{k',l'=1}^n |u_{klr} v_{k'l'r} w_{ijr}| \sum_{z=1}^m |(\Abf_{kl})_{\alpha z} (\Bbf_{k'l'})_{z \beta}|  \Big|^2
	\end{aligned}
	\end{equation}
	due to the triangle inequality and definition of $\hat{\theta}$,
	\begin{align}
		& \leq R \hat{\theta}^2 \sum_{r=1}^R \Big| \sum_{k,l=1}^n \sum_{k',l'=1}^n |u_{klr} v_{k'l'r} w_{ijr}| \sum_{z=1}^m |(\Abf_{kl})_{\alpha z} (\Bbf_{k'l'})_{z \beta}|  \Big|^2 \\
		& \leq R \hat{\theta}^2 \sum_{r=1}^R \Big(\sum_{k,l=1}^n \sum_{k',l'=1}^n |u_{klr} v_{k'l'r} w_{ijr}|^2 \Big) \Big(\sum_{k,l=1}^n \sum_{k',l'=1}^n \Big| \sum_{z=1}^m | (\Abf_{kl})_{\alpha z} (\Bbf_{k'l'})_{z \beta}| \Big|^2 \Big) \\
		& \leq R \hat{\theta}^2 \sum_{r=1}^R \Big(\sum_{k,l=1}^n \sum_{k',l'=1}^n |u_{klr} v_{k'l'r} w_{ijr}|^2 \Big) \Big(\sum_{k,l=1}^n \sum_{k',l'=1}^n \Big( \sum_{z=1}^m | (\Abf_{kl})_{\alpha z}|^2 \Big) \Big(\sum_{z=1}^m |(\Bbf_{k'l'})_{z \beta}|^2 \Big) \Big) \\
		&= R (1.01 (4n + m + R - 2) \varepsilon_\text{machine})^2 \\
		&\;\;\;\;\times \Big(\sum_{k,l=1}^n \sum_{z=1}^m (\Abf_{kl})_{\alpha z}^2 \Big) \Big(\sum_{k',l'=1}^n \sum_{z=1}^m (\Bbf_{k'l'})_{z \beta}^2 \Big) \Big(\sum_{r = 1}^R \sum_{k,l=1}^n \sum_{k',l'=1}^n u_{klr}^2 v_{k'l'r}^2 w_{ijr}^2 \Big)
	\end{align}
	where each inequality follows from an application of the Cauchy--Schwarz inequality and the final equality follows from a rearrangement of terms.
	Finally, since
	\begin{equation}
		\|\fl(f(\Abf, \Bbf)) - f(\Abf, \Bbf)\|^2 = \sum_{\alpha, \beta = 1}^m \sum_{i,j=1}^n |\fl((f(\Abf, \Bbf)_{ij})_{\alpha \beta}) - (f(\Abf, \Bbf)_{ij})_{\alpha \beta}|^2 
	\end{equation}
	and $\|\Abf\|, \|\Bbf\| \leq \mu$, it follows that
	\begin{equation}
		\|\fl(f(\Abf, \Bbf)) - f(\Abf, \Bbf)\| \leq 1.01 (4n + m + R - 2) \sqrt{R} \varepsilon_\text{machine} \mu^2 \sqrt{\sum_{r=1}^R \|\Ubf_{::r}\|^2 \|\Vbf_{::r}\|^2 \|\Wbf_{::r}\|^2}.
	\end{equation}
\end{proof}

Proposition~\ref{prop:numerical-error-matrix-case-randomized} shows how the error in Proposition~\ref{prop:numerical-error-matrix-case} changes when a randomized algorithm $\hat{f}$ is used instead of its deterministic counterpart. In particular, the bound is worse by only a factor $(1-\kappa)^{-1}$.
\begin{prop} \label{prop:numerical-error-matrix-case-randomized}
	Suppose $(4n + m + R - 2) \varepsilon_\textup{machine} \leq 1.01$ and $1-\kappa > 0$. For $\Abf, \Bbf \in B_\mu^{(mn)}$ and an $(n,\tau,\kappa)$-RandABC $\hat{f}$ computed according to \eqref{eq:f-hat-def}, we have
	\begin{equation}
	\begin{aligned}
		&\|\fl(\hat{f}(\Abf, \Bbf)) - \hat{f}(\Abf, \Bbf)\| \\
		&\leq 1.01 (4n + m + R - 2) \sqrt{R} \varepsilon_\textup{machine} (1-\kappa)^{-1} \mu^2 \sqrt{\sum_{r=1}^R \|\Ubf_{::r}\|^2 \|\Vbf_{::r}\|^2 \|\Wbf_{::r}\|^2}.
	\end{aligned}
	\end{equation}
\end{prop}
\begin{proof}
	From \eqref{eq:C-hat}, it follows that
	\begin{equation} \label{eq:randomized-algorithm}
	\begin{aligned}
	&\hat{f}(\Abf, \Bbf)_{ij} = (1-\kappa)^{-1} s_1(i) s_3(j) \sum_{r=1}^R w_{\pi_1(i) \pi_3(j) r} \\
	&\times \bigg( \sum_{k,l=1}^n u_{\pi_1(k) \pi_2(l) r} s_1(k) s_2(l) \Abf_{kl} \bigg)\bigg( \sum_{k',l'=1}^n v_{\pi_2(k') \pi_3(l') r} s_2(k') s_3(l') \Bbf_{k'l'} \bigg) \\
	&= \sum_{r=1}^R \hat{w}_{ijr} \bigg( \sum_{k,l=1}^n \hat{u}_{klr} \Abf_{kl} \bigg)\bigg( \sum_{k',l'=1}^n \hat{v}_{k'l'r} \Bbf_{k'l'} \bigg),
	\end{aligned}
	\end{equation}
	where 
	\begin{equation}
	\begin{aligned}
	&\hat{u}_{klr} \defeq u_{\pi_1(k) \pi_2(l) r} s_1(k) s_2(l), \\
	&\hat{v}_{k'l'r} \defeq v_{\pi_2(k') \pi_3(l') r} s_2(k') s_3(l'), \\
	&\hat{w}_{ijr} \defeq (1-\kappa)^{-1} s_1(i) s_3(j) w_{\pi_1(i) \pi_3(j) r}.		
	\end{aligned}
	\end{equation}
	Note that $\|\hat{\Ubf}_{::r}\| = \|\Ubf_{::r}\|$, $\|\hat{\Vbf}_{::r}\| = \|\Vbf_{::r}\|$ and $\| \hat{\Wbf}_{::r} \| = (1-\kappa)^{-1} \| \Wbf_{::r} \|$. Using this fact, and applying Proposition~\ref{prop:numerical-error-matrix-case} to the computation in \eqref{eq:randomized-algorithm}, gives us the desired result.
\end{proof}

Propositions~\ref{prop:total-error-matrix-case} and \ref{prop:total-error-matrix-case-randomized} give upper bounds for the total error, both due to algorithmic error and numerical rounding, for the deterministic and randomized algorithms, respectively.
\begin{prop} \label{prop:total-error-matrix-case}
	Suppose $(4n + m + R - 2) \varepsilon_\textup{machine} \leq 1.01$. For $\Abf, \Bbf \in B_\mu^{(mn)}$ and an $(n,\tau)$-ABC $f$ computed according to \eqref{eq:bilinear-computation-block}, we have
	\begin{equation} \label{eq:error-num-algorithm}
	\begin{aligned}
		&\|\fl(f(\Abf, \Bbf)) - \Abf \Bbf \| \\
		&\leq 1.01 (4n + m + R - 2) \sqrt{R} \varepsilon_\textup{machine} \mu^2 \sqrt{\sum_{r=1}^R \|\Ubf_{::r}\|^2 \|\Vbf_{::r}\|^2 \|\Wbf_{::r}\|^2} + \mu^2 \tau.
	\end{aligned}
	\end{equation}
\end{prop}
\begin{proof}
	Follows by applying the triangle inequality and using Propositions~\ref{prop:numerical-error-matrix-case} and \ref{prop:guarantee} (i).
\end{proof}

\begin{prop} \label{prop:total-error-matrix-case-randomized}
	Suppose $(4n + m + R - 2) \varepsilon_\textup{machine} \leq 1.01$. For $\Abf, \Bbf \in B_\mu^{(mn)}$ and an $(n,\tau,\kappa)$-RandABC $\hat{f}$ computed according to \eqref{eq:f-hat-def}, we have
	\begin{equation} \label{eq:error-randomized-num-algorithm}
	\begin{aligned}
		&\|\Eb[\fl(\hat{f}(\Abf, \Bbf))] - \Abf \Bbf \| \\
		&\leq 1.01 (4n + m + R - 2) \sqrt{R} \varepsilon_\textup{machine} (1-\kappa)^{-1} \mu^2 \sqrt{\sum_{r=1}^R \|\Ubf_{::r}\|^2 \|\Vbf_{::r}\|^2 \|\Wbf_{::r}\|^2}.
	\end{aligned}
	\end{equation}
\end{prop}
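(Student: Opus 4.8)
The plan is to reduce this statement to two results already established: the exact-arithmetic unbiasedness in Proposition~\ref{prop:expectation} and the per-realization floating point bound in Proposition~\ref{prop:numerical-error-matrix-case-randomized}. First I would invoke Proposition~\ref{prop:expectation}, which gives $\Eb[\hat{f}(\Abf,\Bbf)] = \Abf\Bbf$, to rewrite the quantity of interest purely in terms of the rounding discrepancy:
\begin{equation}
\Eb[\fl(\hat{f}(\Abf,\Bbf))] - \Abf\Bbf = \Eb[\fl(\hat{f}(\Abf,\Bbf))] - \Eb[\hat{f}(\Abf,\Bbf)] = \Eb\big[\fl(\hat{f}(\Abf,\Bbf)) - \hat{f}(\Abf,\Bbf)\big],
\end{equation}
where the last equality is linearity of expectation. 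Since the Frobenius norm is convex, Jensen's inequality (equivalently $\norm{\Eb[\Xbf]} \leq \Eb[\norm{\Xbf}]$ for any integrable random matrix) then yields
\begin{equation}
\norm{\Eb[\fl(\hat{f}(\Abf,\Bbf))] - \Abf\Bbf} \leq \Eb\big[\norm{\fl(\hat{f}(\Abf,\Bbf)) - \hat{f}(\Abf,\Bbf)}\big].
\end{equation}

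Next I would control the integrand using Proposition~\ref{prop:numerical-error-matrix-case-randomized}. The key observation---and the one step that needs care---is that the bound supplied by that proposition is in fact \emph{deterministic}: its right-hand side does not depend on the realization of the Rademacher variables $\{s_i(j)\}$ or the permutations $\{\pi_i\}$. Indeed, as recorded inside the proof of Proposition~\ref{prop:numerical-error-matrix-case-randomized}, the randomized tensors obey $\norm{\hat{\Ubf}_{::r}} = \norm{\Ubf_{::r}}$, $\norm{\hat{\Vbf}_{::r}} = \norm{\Vbf_{::r}}$ and $\norm{\hat{\Wbf}_{::r}} = (1-\kappa)^{-1}\norm{\Wbf_{::r}}$, because a permutation merely relabels the frontal slices and the sign flips leave squared entries unchanged; likewise $\norm{\Abf}$ and $\norm{\Bbf}$ enter the bound unmodified, as the relevant inputs $\tilde{\Abf} = \Mbf_1\Abf\Mbf_2^\top$ and $\tilde{\Bbf} = \Mbf_2\Bbf\Mbf_3^\top$ have the same (unitarily invariant) Frobenius norms. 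Consequently the right-hand side of Proposition~\ref{prop:numerical-error-matrix-case-randomized} is a fixed constant, so the expectation over the randomness leaves it unchanged.

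Combining the two displays with this constant bound gives the claimed inequality directly. I expect the only genuine subtlety to be the justification that the floating point estimate holds \emph{surely} (for every outcome of the randomness) rather than merely in expectation, so that pulling it outside the expectation operator is legitimate; this rests entirely on the slice-norm identities above. The remainder is a one-line application of Jensen's inequality together with the unbiasedness in Proposition~\ref{prop:expectation}.
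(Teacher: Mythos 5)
Your proof is correct and takes essentially the same route as the paper, whose own proof is the one-line ``take expectations of the inequality in Proposition~\ref{prop:numerical-error-matrix-case-randomized} and apply Jensen's inequality'' (with the unbiasedness of Proposition~\ref{prop:expectation} used implicitly to replace $\Eb[\hat{f}(\Abf,\Bbf)]$ by $\Abf\Bbf$). You have simply made explicit the details the paper suppresses, in particular the key observation that the right-hand side of Proposition~\ref{prop:numerical-error-matrix-case-randomized} is deterministic---independent of the realization of the signs and permutations---so that it passes through the expectation unchanged.
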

\begin{proof}
	Follows by taking expectations of the inequality in Proposition~\ref{prop:numerical-error-matrix-case-randomized} and applying Jensen's inequality \citep{schaefer1976}. 
\end{proof}
Note that Proposition~\ref{prop:total-error-matrix-case-randomized} implies that
\begin{equation}
\Eb[\fl(\hat{f}(\Abf, \Bbf))] \rightarrow \Abf \Bbf \;\;\;\; \text{as} \;\;\;\; \varepsilon_\text{machine} \rightarrow 0, 
\end{equation}
which means that as the numerical precision increases, $\Eb[\fl(\hat{f}(\Abf, \Bbf))]$ approaches $\Abf \Bbf$. This is consistent with Proposition~\ref{prop:expectation}.

To make the bounds in \eqref{eq:error-num-algorithm} and \eqref{eq:error-randomized-num-algorithm} easier to compare, using the fact that $(1-\kappa)^{-1} = 1 + \kappa + O(\kappa^2)$, we can rewrite \eqref{eq:error-randomized-num-algorithm} as 
\begin{equation} \label{eq:error-randomized-num-algorithm-2}
\begin{aligned}
	&\|\Eb[\fl(\hat{f}(\Abf, \Bbf))] - \Abf \Bbf \| \\
	&\leq 1.01 (4n + m + R - 2) \sqrt{R} \varepsilon_\text{machine} \mu^2 \sqrt{\sum_{r=1}^R \|\Ubf_{::r}\|^2 \|\Vbf_{::r}\|^2 \|\Wbf_{::r}\|^2} \\
	&+ 1.01 (4n + m + R - 2) \sqrt{R} \varepsilon_\text{machine} \kappa \mu^2 \sqrt{\sum_{r=1}^R \|\Ubf_{::r}\|^2 \|\Vbf_{::r}\|^2 \|\Wbf_{::r}\|^2} + O(\varepsilon_\text{machine} \kappa^2).
\end{aligned}
\end{equation}
The first term on the right hand side of \eqref{eq:error-num-algorithm} and \eqref{eq:error-randomized-num-algorithm-2} are identical. Recall from \eqref{eq:epsilon-bound} that $|\kappa| \leq n^{-5/2} \|\Ye - \Xe\| \leq n^{-5/2} \tau$. Consequently, if the quantity 
\begin{equation} \label{eq:square-root-of-sum}
	 \sqrt{\sum_{r=1}^R \|\Ubf_{::r}\|^2 \|\Vbf_{::r}\|^2 \|\Wbf_{::r}\|^2}
\end{equation}
is not too large, and $m$, $n$ and $R$ are of moderate size, the second term in \eqref{eq:error-randomized-num-algorithm-2} will be smaller than the second term in \eqref{eq:error-num-algorithm}, showing that the result in Proposition~\ref{prop:expectation} largely carries over to a setting with floating point arithmetic. The following example illustrates this.
\begin{example}
	For the sake of this example, suppose we are using an approximate variant of Strassen's algorithm. Then $R = 7$ and $n = 2$. For Strassen's algorithm, the quantity in \eqref{eq:square-root-of-sum} is approximately 6, so we will assume that it is less than 10 for our approximate variant of the algorithm. Suppose we are multiplying two $\text{100,000} \times \text{100,000}$ matrices in single precision, so that $m = \text{50,000}$ and $\varepsilon_\text{machine} \sim 10^{-8}$. Then, the second term in \eqref{eq:error-randomized-num-algorithm-2} is upper bounded by a quantity which is on the order of
	\begin{equation}
		10 \cdot 1.01 \cdot (4 \cdot 2+50000+7-2) \cdot \frac{\sqrt{7}}{2^{5/2}} \cdot 10^{-8} \mu^2 \tau \approx 0.0024 \mu^2 \tau,
	\end{equation}
	which is much smaller than the second term in \eqref{eq:error-num-algorithm}.
	The implementation of Strassen's algorithm in \citep{huang2018} achieves a speed-up over an efficient implementation of the standard matrix multiplication algorithm for square matrices with as few as 1,536 rows/columns. So multiplication of matrices with 100,000 rows/columns is well beyond the problem size for which fast algorithms can outperform the standard algorithm.
\end{example}

\section{Experiments} \label{sec:experiments}

In this section we present some results from experiments, with additional results provided in Appendix~\ref{app:additional-experiments}. We implement all experiments in Matlab with certain parts implemented in C. All our code is available online at \url{https://github.com/OsmanMalik/random-approximate-matrix-multiplication}.

In our experiments, we draw the matrices $\Abf, \Bbf$ from different random distributions. By \emph{Gaussian matrix}, we mean a matrix whose elements are realizations of i.i.d.\ standard normal random variables. Similarly, a \emph{uniform matrix} is one whose elements are realizations of i.i.d.\ $\Uniform(0,1)$ random variables. We also consider three types of random \emph{adversarial matrices}, which were proposed by \citep{ballard2016} and are designed to be challenging for Strassen's algorithm.
\begin{definition}[Adversarial matrices]
	Consider a matrix pair $\Abf, \Bbf \in \Rb^{n \times n}$. We say that it is \emph{type 1 adversarial} if
	\begin{equation}
	\begin{aligned}
	&a_{ij} \sim
	\begin{cases}
	\Uniform(0, 1/n^2) 	& \text{if } j > n/2, \\
	\Uniform(0,1) 		& \text{otherwise},
	\end{cases}
	\;\;\;\; 
	&b_{ij} \sim
	\begin{cases}
	\Uniform(0, 1/n^2) 	& \text{if } i < n/2, \\
	\Uniform(0,1) 		& \text{otherwise}.
	\end{cases}
	\end{aligned}
	\end{equation}
	We say that the matrix pair is \emph{type 2 adversarial} if 
	\begin{equation}
	\begin{aligned}
	&a_{ij} \sim
	\begin{cases}
	\Uniform(0, n^2) 	& \text{if } i < n/2 \text{ and } j > n/2, \\
	\Uniform(0,1) 		& \text{otherwise},
	\end{cases}
	\;\;\;\;
	&b_{ij} \sim
	\begin{cases}
	\Uniform(0, 1/n^2) 	& \text{if } j < n/2, \\
	\Uniform(0,1) 		& \text{otherwise}.
	\end{cases}
	\end{aligned}
	\end{equation}
	We say that the matrix pair is \emph{type 3 adversarial} if
	\begin{equation}
	a_{ij}, b_{ij} \sim
	\begin{cases}
	\Uniform(0, 1/n^2) 	& \text{if } i < n/2 \text{ and } j > n/2, \text{ or if } i \geq n/2 \text{ and } j \leq n/2 \\
	\Uniform(0,1) 		& \text{otherwise},
	\end{cases}
	\end{equation}
	Here, all the entries are assumed to be independent.
\end{definition}
We will also consider the Hilbert matrix $\Hbf \in \Rb^{n \times n}$, which has entries $h_{ij} = 1/(i+j-1)$, since it appears to be a particularly challenging matrix to the rescaling method which we consider in Section~\ref{sec:setting-2-experiments}.

For an ABC, the corresponding randomized computation with $Q$ recursions, $\hat{F}^{(Q)}$, was defined in \eqref{eq:recursive-random-formula}. In this section, we will use $F^{(Q)}$ to denote the deterministic counterpart. It is defined in the same way, but with each $s_i^{(q)}(j) = 1$ and each $\pi_i^{(q)}(j) = j$, i.e., with no randomness involved. As in Example~\ref{ex:EBC}, for EBCs we will use $\hat{G}^{(Q)}$ and $G^{(Q)}$ to denote the corresponding randomized and deterministic computations with $Q$ recursions.

\subsection{Approximate algorithm} \label{sec:setting-1-experiments}

We create an ABC of the form \eqref{eq:bilinear-computation-block} by taking the tensors $\Ue$, $\Ve$ and $\We$ corresponding to Strassen's algorithm and perturbing them: For each of the three tensors, we add i.i.d.\ mean zero Gaussian noise with standard deviation $10^{-3}$ to each element in the tensor equal to 1 as well as to five randomly selected elements that are equal to 0. Since we do the computations in double precision, and since we add a considerable amount of noise, these experiments are designed to test Propositions~\ref{prop:expectation}--\ref{prop:expectation-recursive}, i.e., we can ignore any floating point error.

In the first experiment, we draw two Gaussian matrices $\Abf, \Bbf \in \Rb^{80 \times 80}$ and compute 
\begin{equation} \label{eq:error-of-average}
	\Big\| \frac{1}{n} \sum_{i=1}^n \hat{F}^{(Q)}_i(\Abf, \Bbf) - \Abf \Bbf \Big\| / \|\Abf \Bbf\|
\end{equation}
for $n \in [10^4]$ and $Q \in [3]$. Here, $\hat{F}^{(Q)}_i$ is the $i$th realization of $\hat{F}^{(Q)}$. Figure~\ref{fig:Experiment2} shows the results. As expected from Propositions~\ref{prop:expectation} and \ref{prop:expectation-recursive}, the quantity in \eqref{eq:error-of-average} becomes smaller as $n$ increases.  

In the second experiment, we draw two Gaussian matrices $\Abf, \Bbf \in \Rb^{320 \times 320}$ and compute
\begin{equation} \label{eq:average-of-error}
	\| \hat{F}^{(Q)}_i(\Abf, \Bbf) - \Abf \Bbf \| / \|\Abf \Bbf\|
\end{equation}
for $i \in [100]$ and $Q \in [5]$, i.e., the relative error for each of 100 trials. The box plots in Figure~\ref{fig:Experiment3} compares the empirical distribution of \eqref{eq:average-of-error} to the relative error for the deterministic approximate algorithm, i.e., $\|F^{(Q)}(\Abf, \Bbf) - \Abf \Bbf\| / \|\Abf \Bbf\|$. In this particular case, randomization does not impact the median error and there is very little variation between trials. Figure~\ref{fig:Experiment3-hilbert} repeats this experiment, but with $\Abf = \Bbf = \Hbf$, where $\Hbf$ is the $320 \times 320$ Hilbert matrix. In this case, the randomized scheme frequently results in a slightly larger error. 
In Figure~\ref{fig:S-experiment3-mat-type-normal}--\ref{fig:S-experiment3-mat-type-adversarial-3} in Appendix~\ref{app:additional-experiments} we provide additional results for when $\Abf, \Bbf$ are Gaussian, uniform and type 1--3 adversarial. Those results demonstrate that randomization can both increase and decrease the error in this setting. However, as expected from Propositions~\ref{prop:guarantee} and \ref{prop:constant-bound}, the difference in error between the randomized and deterministic variants is typically not substantial.

\begin{figure}[ht!]
	\begin{center}
		\includegraphics[width=.6\columnwidth]{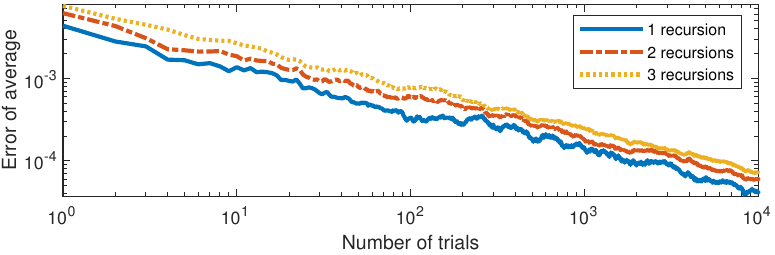}
	\end{center}
	\caption{Error of average for randomized ABC. $\Abf, \Bbf \in \Rb^{80 \times 80}$ are Gaussian and remain fixed throughout the experiment. The ABC is a perturbed variant of Strassen's algorithm.}
	\label{fig:Experiment2}
\end{figure}

\begin{figure}[ht!]
	\begin{center}
		\includegraphics[width=.6\columnwidth]{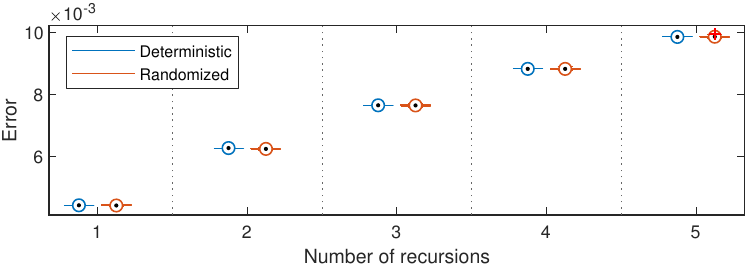}
	\end{center}
	\caption{Error for deterministic ABC compared to the error of the randomized counterpart, over 100 realizations of the randomized algorithm. $\Abf, \Bbf \in \Rb^{320 \times 320}$ are Gaussian and remain fixed over all realizations. Note that, unlike Figure~\ref{fig:Experiment2}, this figure shows the distribution of errors, rather than the errors of averages. The ABC is a perturbed variant of Strassen's algorithm.}
	\label{fig:Experiment3}
\end{figure}

\begin{figure}[ht!]
	\begin{center}
		\includegraphics[width=.6\columnwidth]{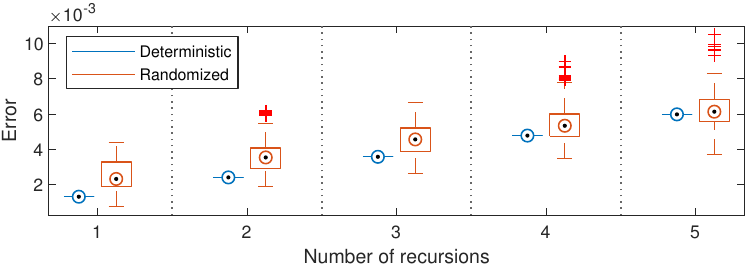}
	\end{center}
	\caption{Same as Figure~\ref{fig:Experiment3}, but with $\Abf$ and $\Bbf$ equal to the $320 \times 320$ Hilbert matrix.}
	\label{fig:Experiment3-hilbert}
\end{figure}

To see how these results carry over to other ABCs, we also present results in Figure~\ref{fig:Experiment3-bini} from experiments which use an instance of the $12 \times 12$ APA algorithm in \citep{bini1979} with $\varepsilon = \text{1e\textminus4}$ in the representation \eqref{eq:APA-y}. Results are shown for single recursion experiments on $12 \times 12$ matrices drawn from different distributions. As for the ABC based on the perturbed variant of Strassen's algorithm, the randomized variant of this ABC sometimes results in a smaller and sometimes in a larger error than the deterministic counterpart. 

\begin{figure}[ht!]
	\centering  
	\includegraphics[width=.8\textwidth]{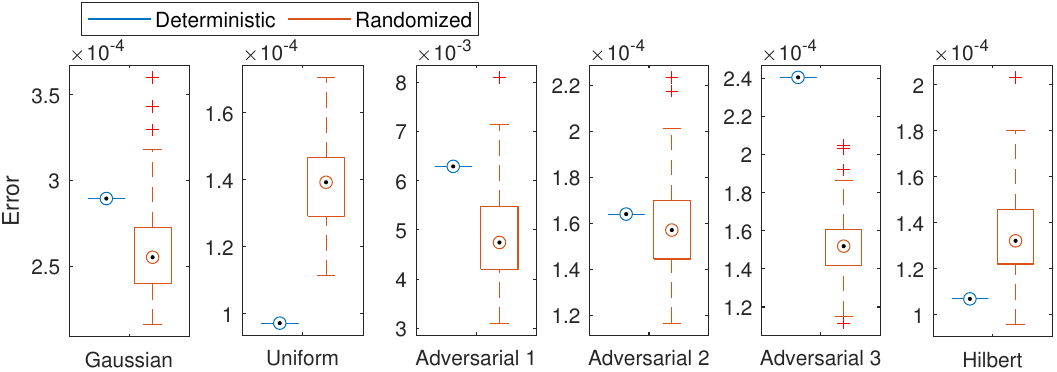}
	\caption{Error for deterministic ABC compared to the error of the randomized counterpart, over 100 realizations of the randomized algorithm. $\Abf, \Bbf \in \Rb^{12 \times 12}$ are drawn from six different random distributions, one in each subplot, and remain fixed over all realizations. The ABC is an instance of the $12 \times 12$ APA algorithm of \citep{bini1979}.}
	\label{fig:Experiment3-bini}
\end{figure}

\subsection{Exact algorithm in single precision floating point arithmetic} \label{sec:setting-2-experiments}

We first consider Strassen's algorithm, without any perturbations so that it is exact, when the computations are done in single precision floating point arithmetic. In error computations, we use the double precision product for $\Abf \Bbf$ computed using the standard algorithm as the true value of the product. Recall that by ``standard algorithm'' we mean the $O(n^3)$ algorithm.

In the first experiment, we draw two Gaussian matrices $\Abf, \Bbf \in \Rb^{80 \times 80}$ and compute the quantity in \eqref{eq:error-of-average}, but with each $\hat{F}_i^{(Q)}$ replaced by $\hat{G}_i^{(Q)}$, where $\hat{G}_i^{(Q)}$ is the $i$th realization of $\hat{G}^{(Q)}$, for $n \in [10^4]$ and $Q \in [3]$. Figure~\ref{fig:Experiment4} shows the results, where we also have included the error for the standard algorithm computed in single precision as a reference. 
Although it is clear that the randomized algorithms do not converge to the exact correct answer, it seems like their expectations perform better than the standard algorithm. 
Although the figure only shows the result for a specific random pair $(\Abf, \Bbf)$, we get qualitatively similar results every time we draw a new Gaussian matrix pair. The fact that the average of a single, or a few, outcomes of the randomized algorithms performs worse than the standard algorithm is to be expected, since Strassen's algorithm is more susceptible to numerical error that the standard algorithm. 
Figure~\ref{fig:Experiment4-hilbert} repeats this experiment, but with $\Abf = \Bbf = \Hbf$, where $\Hbf$ is the $80 \times 80$ Hilbert matrix. The results are similar to those in Figure~\ref{fig:Experiment4}.
Figures~\ref{fig:S-experiment4-mat-type-normal}--\ref{fig:S-experiment4-mat-type-adversarial-3} in the appendix provide additional results for when $\Abf, \Bbf$ are Gaussian, uniform and type 1--3 adversarial.

\begin{figure}[ht!]
	\begin{center}
		\includegraphics[width=.6\columnwidth]{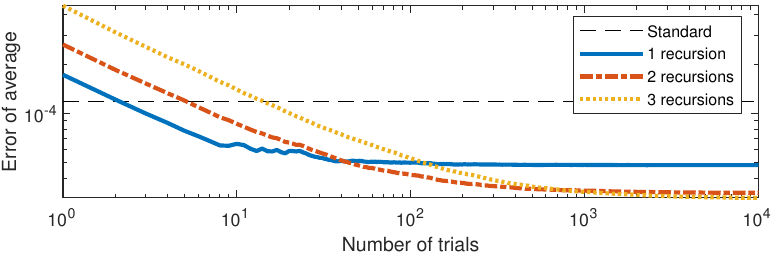}
	\end{center}
	\caption{Error of average for randomized EBC compared to the standard $O(n^3)$ algorithm in single precision floating point arithmetic. $\Abf, \Bbf \in \Rb^{80 \times 80}$ are Gaussian and remain fixed throughout the experiment.}
	\label{fig:Experiment4}
\end{figure}

\begin{figure}[ht!]
	\begin{center}
		\includegraphics[width=.6\columnwidth]{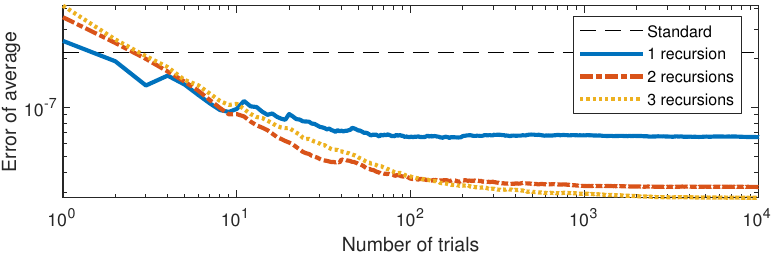}
	\end{center}
	\caption{Same as Figure~\ref{fig:Experiment4}, but with $\Abf$ and $\Bbf$ equal to the $80 \times 80$ Hilbert matrix.}
	\label{fig:Experiment4-hilbert}
\end{figure}

In the second experiment, we first compare the deterministic and three randomized versions of Strassen's algorithm, as well as the rescaling scheme proposed in \citep{ballard2016}. The first randomized version uses both random permutations and random signs. The two other randomized versions use only random signs and only random permutations, respectively. We include these two variants to better understand how random signs and random permutations each impact the performance. These three randomized methods will be referred to as ``fully randomized,'' ``random sign'' and ``random permutation,'' respectively. The algorithm that only uses random permutations corresponds to the method suggested in \cite{castrapel2007}.

The purpose of the rescaling scheme in \citep{ballard2016} is to improve numerical stability of fast EBCs, and has two steps: Outside scaling and inside scaling. With outside scaling, $\Cbf = \Abf \Bbf$ is computed via 
\begin{equation} \label{eq:outside-scaling}
	\Cbf_\text{outside} \defeq \Dbf_\Abf G^{(Q)}(\Dbf_\Abf^{-1} \Abf, \Bbf \Dbf_\Bbf^{-1}) \Dbf_\Bbf, 
\end{equation}
where $\Dbf_\Abf \defeq \diag (\max_{j} |a_{ij}|)$ and $\Dbf_\Bbf \defeq \diag (\max_{i} |b_{ij}|)$.
With inside scaling, $\Cbf$ is instead computed via
\begin{equation} \label{eq:inside-scaling}
	\Cbf_{\text{inside}} \defeq G^{(Q)}(\Abf \Dbf, \Dbf^{-1} \Bbf),
\end{equation}
where $\Dbf \defeq \diag(\sqrt{\max_j |b_{kj}|/ \max_i |a_{ik}|})$. 
In exact arithmetic, the scaling matrices in \eqref{eq:outside-scaling} and \eqref{eq:inside-scaling} will cancel out, in which case $\Cbf_\text{outside} = \Cbf_\text{inside} = \Cbf$. Both outside and inside scaling can be applied at the same time, as well as multiple times in an alternating fashion. The rescaling scheme that works best in numerical experiments in \citep{ballard2016} does outside-inside rescaling twice. We use the same rescaling scheme in our experiments, which we refer to as ``Rescaled 2x O-I.'' See Section~6 in \citep{ballard2016}, in particular Algorithm~3, for further details on the rescaling method.

For the experiment, we draw two random matrices $\Abf, \Bbf \in \Rb^{320 \times 320}$ and compute the quantity in \eqref{eq:average-of-error}, but with each $\hat{F}_i^{(Q)}$ replaced by $\hat{G}_i^{(Q)}$, for $i \in [100]$ and $Q \in [5]$. Figures~\ref{fig:Experiment5-normal}--\ref{fig:Experiment5-adversarial-3} show the results for Gaussian, uniform and type 1--3 adversarial matrices. Figure~\ref{fig:Experiment5-hilbert} shows the result when $\Abf$ and $\Bbf$ are both Hilbert matrices. These figures include the error of the standard algorithm computed in single precision as a reference. 

When the matrices are Gaussian (Figure~\ref{fig:Experiment5-normal}), all algorithms perform roughly the same with little variation between trials. 
For uniform matrices (Figure~\ref{fig:Experiment5-uniform}), the rescaling method performs about the same as the deterministic method. The fully randomized method has a lower error than the deterministic method, and it seems like this improvement comes from the random signs.
For type 1 adversarial matrices (Figure~\ref{fig:Experiment5-adversarial-1}), the rescaling method does remarkably well, achieving the same accuracy as the standard algorithm. The fully randomized algorithm outperforms the deterministic algorithm, and it seems like this improvement is coming from the random signs. 
For type 2 adversarial matrices (Figure~\ref{fig:Experiment5-adversarial-2}), the fully randomized method will sometimes perform worse than the deterministic algorithm, but has a lower median error for 2 or more recursions. The rescaling method also improves on the deterministic method, although the median error for the fully randomized method is lower for 4 recursions or more. 
Type 3 adversarial matrices were specifically proposed in \citep{ballard2016} to show a situation when rescaling does not work. This is clear in Figure~\ref{fig:Experiment5-adversarial-3}, where the rescaling method has the same error as the deterministic method. Our fully randomized method, however, has a lower error, and it seems like both the random signs and random permutations contribute to this performance improvement. When the matrices are Hilbert matrices (Figure~\ref{fig:Experiment5-hilbert}), the rescaling method does very poorly, with a much larger error than the deterministic method. Once again, our randomized method reduces the error compared to the deterministic method, with both the random signs and random permutations contributing to the improved performance.
Figures~\ref{fig:S-experiment5-mat-type-normal}--\ref{fig:S-experiment5-mat-type-adversarial-3} in the appendix provide additional results for when $\Abf, \Bbf$ are Gaussian, uniform, and type 1--3 adversarial. 

\begin{figure}[ht!]
	\begin{center}
		\includegraphics[width=.6\columnwidth]{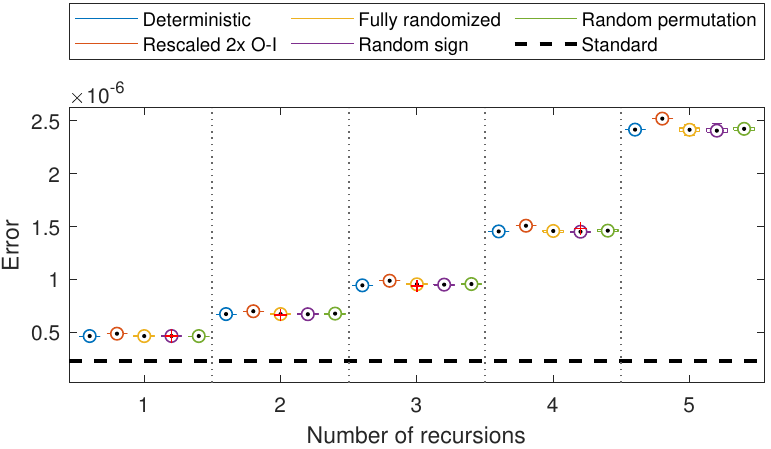}
	\end{center}
	\caption{Error for different variants of the Strassen EBC in single precision floating point arithmetic, over 100 realizations of the randomized algorithms. $\Abf, \Bbf \in \Rb^{320 \times 320}$ are Gaussian and remain fixed over all realizations.}
	\label{fig:Experiment5-normal}
\end{figure}

\begin{figure}[ht!]
	\begin{center}
		\includegraphics[width=.6\columnwidth]{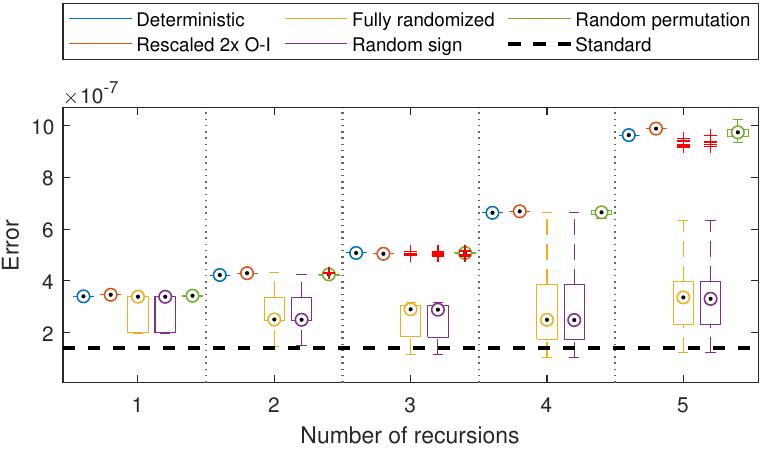}
	\end{center}
	\caption{Same as Figure~\ref{fig:Experiment5-normal}, but with uniform matrices.}
	\label{fig:Experiment5-uniform}
\end{figure}

\begin{figure}[ht!]
	\begin{center}
		\includegraphics[width=.6\columnwidth]{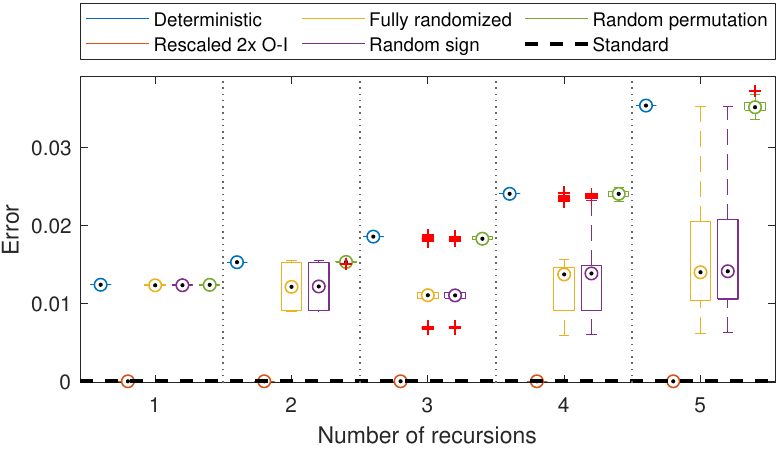}
	\end{center}
	\caption{Same as Figure~\ref{fig:Experiment5-normal}, but with type 1 adversarial matrices.}
	\label{fig:Experiment5-adversarial-1}
\end{figure}

\begin{figure}[ht!]
	\begin{center}
		\includegraphics[width=.6\columnwidth]{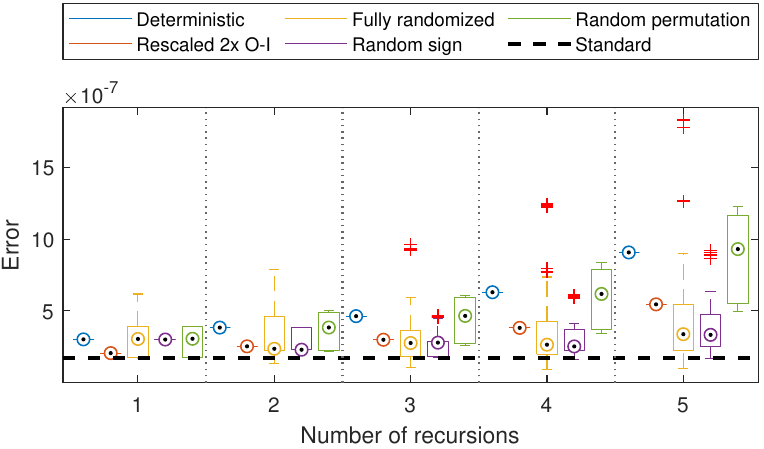}
	\end{center}
	\caption{Same as Figure~\ref{fig:Experiment5-normal}, but with type 2 adversarial matrices.}
	\label{fig:Experiment5-adversarial-2}
\end{figure}

\begin{figure}[ht!]
	\begin{center}
		\includegraphics[width=.6\columnwidth]{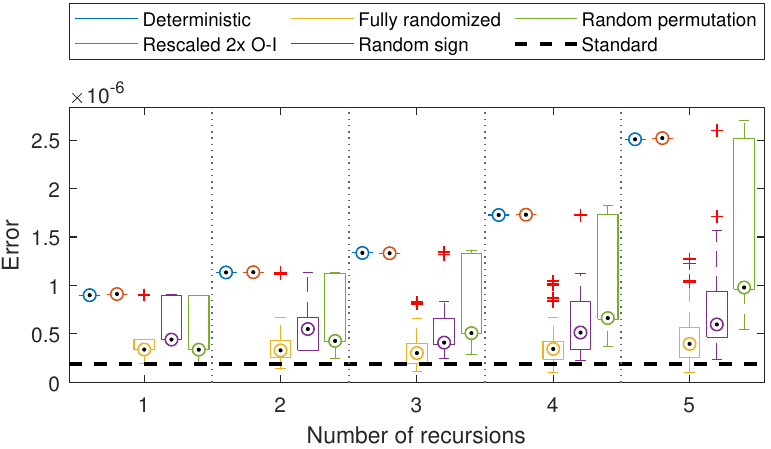}
	\end{center}
	\caption{Same as Figure~\ref{fig:Experiment5-normal}, but with type 3 adversarial matrices.}
	\label{fig:Experiment5-adversarial-3}
\end{figure}

\begin{figure}[ht!]
	\begin{center}
		\includegraphics[width=.49\textwidth]{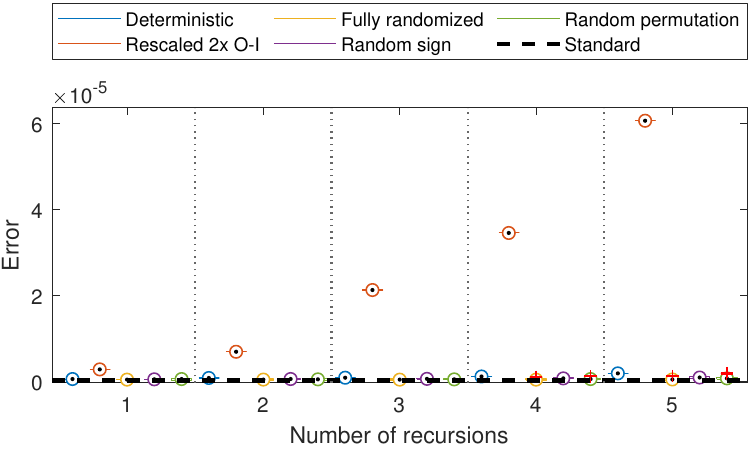}
		\includegraphics[width=.49\textwidth]{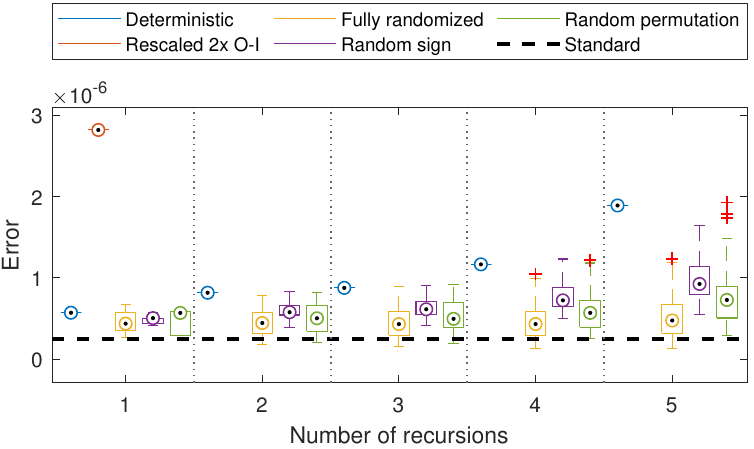}
	\end{center}
	\caption{Same as Figure~\ref{fig:Experiment5-normal}, but with Hilbert matrices. The left and right plots show the same results, with the right plot zoomed in closer to the interesting portion. The left plot is included to give a sense of the size of the errors for the rescaling method compared to the other methods.}
	\label{fig:Experiment5-hilbert}
\end{figure}

In Figure~\ref{fig:Experiment5-bini}, we give the results for some experiments which use an EBC for $12 \times 12$ matrix multiplication derived as in \citep{bini1980b} from the $12 \times 12$ APA algorithm in \citep{bini1979} via the approach discussed in Section~\ref{sec:related-work}.\footnote{There appears to be a few typos in the definition of $w_r^{(s)}$ in Equation~(5.2) in \citep{bini1980b}, which defines the APA scheme. We encourage the reader to consult our code for a corrected definition.} 
The APA algorithm in question has an error tensor whose entries are polynomials with maximum degree $d=6$.
Consequently, 7 distinct values of $\varepsilon$ are required in \eqref{eq:APA-to-EBC} to derive an exact scheme.
For this purpose, we choose $\varepsilon_1, \varepsilon_2, \ldots, \varepsilon_7$ to be $0.1, 0.2, \ldots, 0.7$. 
Our results are for single recursion experiments on $12 \times 12$ matrices drawn from different distributions. 
The results are similar to those for the Strassen EBC, except for the Gaussian case where the variability of the randomized algorithms is much higher. The variability of the randomized methods also appears to be somewhat higher on the other matrix types as well. Overall, our randomized methods perform favorably compared to the deterministic method, especially for the adversarial matrices and the Hilbert matrix. The rescaling method once again performs very well on type 1 adversarial matrices, but poorly on the Hilbert matrix.

\begin{figure}[ht!]
	\centering  
	\includegraphics[width=.8\textwidth]{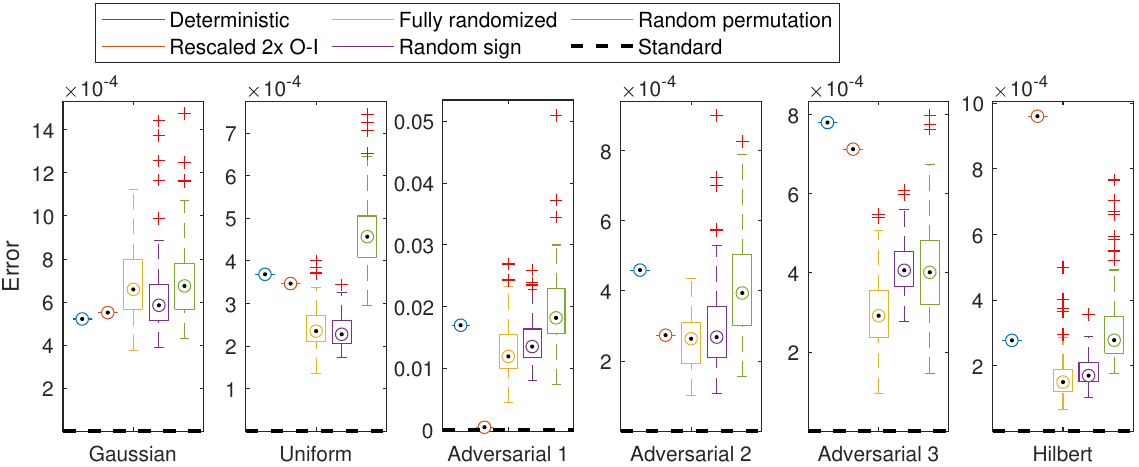}
	\caption{Error for different variants of the Bini~\citep{bini1980b} EBC in single precision floating point arithmetic, over 100 realizations of the randomized algorithms. $\Abf, \Bbf \in \Rb^{12 \times 12}$ are drawn from six different random distributions, one in each subplot, and remain fixed over all realizations.}
	\label{fig:Experiment5-bini}
\end{figure}

Although our method has some variability in performance due to being randomized, it seems to reduce the error compared to the deterministic method more reliably on a wider range of matrices than the rescaling method does, especially when more recursions are used. Our method also works well with the Hilbert matrix, which leads to large errors for the rescaling method. One benefit of our randomized method is that it can be done exactly even in low precision arithmetic, since it only involves permutation of rows/columns and the flipping of signs. The rescaling method, on the other hand, involves diagonal matrices with floating point numbers along the diagonals, which adds another potential source for numerical error in the algorithm. These experiments also indicate that both random signs and random permutations may separately help to reduce the error, and that combining the two seems to lower the error further.

\subsection{Randomization of ABC derived from APA algorithm}

In Section~\ref{sec:related-work} we discussed APA algorithms and how to derive EBCs from them by taking a linear combination of a few instances of the APA algorithm following an idea of Bini~\citep{bini1980b}.
Although the EBCs derived in this fashion are exact mathematically, they may suffer from greater numerical error than the standard $O(n^3)$ algorithm due to cancellation.
In this subsection, we compare an instance of the EBC in \citep{bini1980b}, which is derived from the APA algorithm in \citep{bini1979}, to a randomized version of the ABC we get by fixing the error parameter in the same APA algorithm.
The goal with these experiments is to see if the expectation of our randomized ABC can perform better than the EBC.
For the EBC, which is given by \eqref{eq:APA-to-EBC}, we choose $\varepsilon_1, \varepsilon_2, \ldots, \varepsilon_7$ to be $0.1, 0.2, \ldots, 0.7$. 
To get an ABC, we choose $\varepsilon = \text{1.5e\textminus2}$ in the APA algorithm. 
This ABC is then randomized as in Definition~\ref{def:RandABC}. 
The experiments are similar to that in Figure~\ref{fig:Experiment2}, but with an added baseline which shows the performance of the EBC.
The experiments are done for a single recursion on $12 \times 12$ matrices.
All computations are done in single precision arithmetic. 
Figure~\ref{fig:Experiment-6} shows the results, which each subplot showing the outcome for a different kind of matrix. 
Based on these result, the expectation of the randomized ABC seems to perform better than the EBC.
\begin{figure}[ht!]
	\centering  
	\includegraphics[width=.9\textwidth]{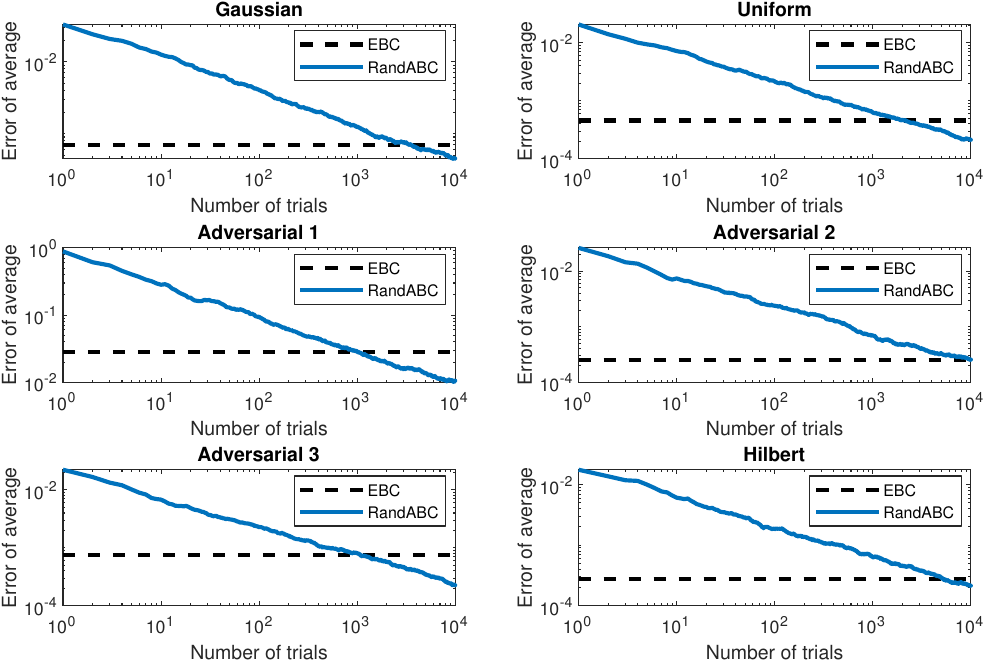}
	\caption{Comparison of EBC in \citep{bini1980b} to randomized ABC derived from APA algorithm in \citep{bini1979}.}
	\label{fig:Experiment-6}
\end{figure}

\section{Conclusion} \label{sec:conclusion}

In this paper we have suggested an approach for randomizing formulas for bilinear computation of matrix products which does not increase the asymptotic computational complexity. 
We have considered the implications of this approach when there are two sources of error: The first due to the algorithm itself being only approximately correct, and the second due to numerical error from using floating point arithmetic.
We believe that our results are encouraging, and provide ideas for improving the properties of matrix multiplication when these error sources are present separately or together. 

An interesting area for future research is to investigate other methods of randomization (e.g.\ combining the random sign changes in this paper with the fast Hadamard transform) and see if such a method can further improve the results. It would also be interesting to investigate how fast randomized approximate methods for matrix multiplication can be used in applications. One potential application areas is for computations in neural networks, where some amount of error in the computation is acceptable, and where randomization may help improve robustness of the trained model. It would also be interesting to investigate how our randomization scheme can be used as a component in fast linear algebra algorithms, such as recursive dense matrix inversion; see \citep{demmel2007a} for details.

\section*{Acknowledgments}

This material is based upon work supported by the National Science Foundation under Grant No.\ ECCS-1810314.

\bibliographystyle{tfs}
\bibliography{new-zotero-library}

\appendix

\section{Additional experiments} \label{app:additional-experiments}

First, we repeat the second experiment we did in Section~\ref{sec:setting-1-experiments} for the Strassen ABC with multiple Gaussian, uniform, and type 1--3 adversarial matrices. We use the same setup as in the main manuscript: For each experiment, we create an approximate algorithm by perturbing Strassen's algorithm, we draw random matrices $\Abf, \Bbf \in \Rb^{320 \times 320}$, and then we compute the quantity in (\ref{eq:average-of-error}) for $i \in [100]$ and $Q \in [5]$ and compare it to the relative error for the deterministic approximate algorithm. Figures~\ref{fig:S-experiment3-mat-type-normal}--\ref{fig:S-experiment3-mat-type-adversarial-3} show the results. In the case of Gaussian matrices (Figure~\ref{fig:S-experiment3-mat-type-normal}), the results look very similar to those in the main manuscript, with almost no difference in error between the deterministic and the randomized approximate algorithms. Figures~\ref{fig:S-experiment3-mat-type-uniform}--\ref{fig:S-experiment3-mat-type-adversarial-3} (uniform and type 1--3 adversarial) show that randomization can both increase and decrease the error. Note that both the deterministic and randomized approximate algorithms do particularly poorly on type 1 adversarial matrices.

Next, we repeat the first experiment in Section~\ref{sec:setting-2-experiments} with multiple Gaussian, uniform and type 1--3 adversarial matrices. We use the same setup as in the main manuscript: For each experiment, we draw random matrices $\Abf, \Bbf \in \Rb^{80 \times 80}$ and compute the quantity in \eqref{eq:error-of-average}, but with each $\hat{F}_i^{(Q)}$ replaced by $\hat{G}_i^{(Q)}$, for $n \in [10^4]$ and $Q \in [3]$. Figures~\ref{fig:S-experiment4-mat-type-normal}--\ref{fig:S-experiment4-mat-type-adversarial-3} show the results. Although using more recursions seems to increase the error for a single realization of the algorithm, it also seems to reduce the error of the expectation of the computed matrix product. 

Finally, we repeat the second experiment in Section~\ref{sec:setting-2-experiments} for the Strassen EBC with multiple Gaussian, uniform and type 1--3 adversarial matrices. We use the same setup as in the main manuscript: For each experiment, we draw random matrices $\Abf, \Bbf \in \Rb^{320 \times 320}$ and compute the quantity in \eqref{eq:average-of-error}, but with each $\hat{F}_i^{(Q)}$ replaced by $\hat{G}_i^{(Q)}$, for $i \in [100]$ and $Q \in [5]$. We do the same for variants of the algorithm that only use random permutations or random sign functions. We compare these to the deterministic version $G^{(Q)}$, and also include the error of the standard algorithm computed in single precision as a reference. Figures~\ref{fig:S-experiment5-mat-type-normal}--\ref{fig:S-experiment5-mat-type-adversarial-3} show the results, which look very similar to those presented in the main manuscript.

\begin{minipage}[ht!]{.47\textwidth}
	\includegraphics[width=1\textwidth]{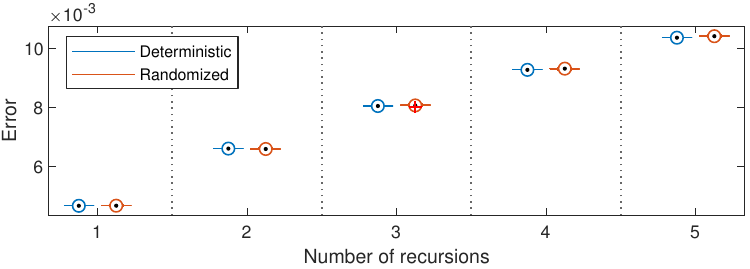}
	\includegraphics[width=1\textwidth]{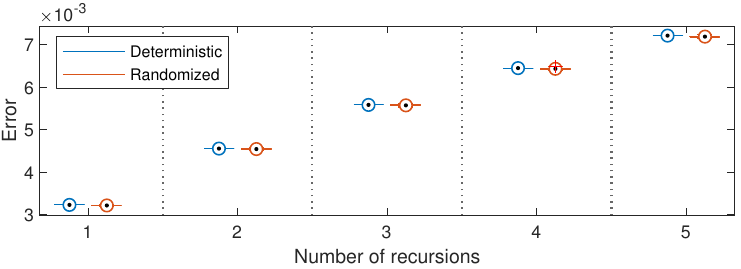}
	\includegraphics[width=1\textwidth]{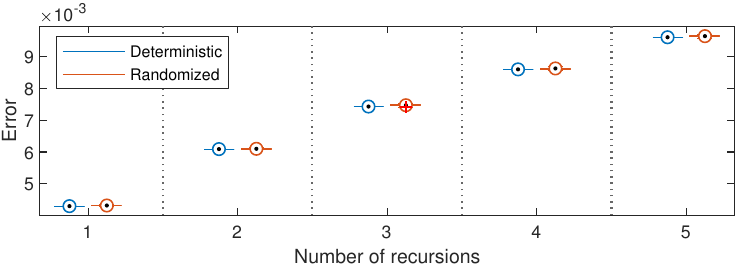}
	\includegraphics[width=1\textwidth]{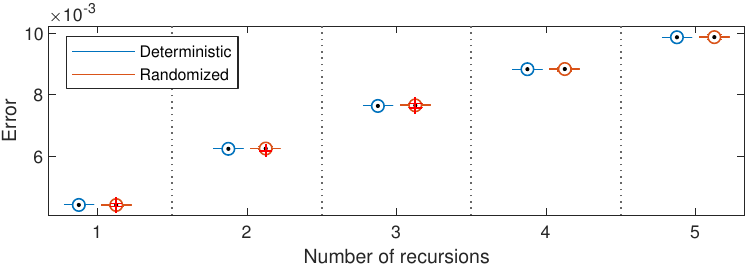}
	\includegraphics[width=1\textwidth]{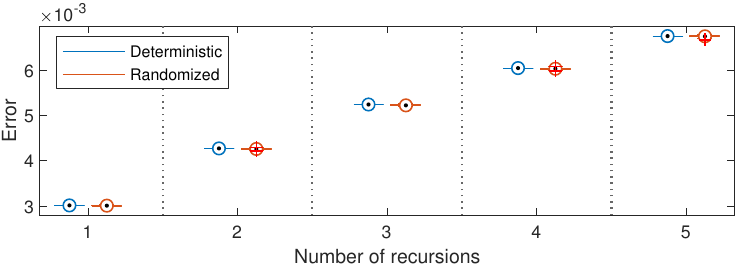}
	\captionof{figure}{(Five subplots above) Error for determinstic ABC compared to the error for the randomized counterpart, over 100 realizations of the randomized algorithm. $\Abf, \Bbf \in \Rb^{320 \times 320}$ are \textbf{Gaussian}, and each subplot corresponds to one realization of the pair $(\Abf,\Bbf)$.}
	\label{fig:S-experiment3-mat-type-normal}
\end{minipage}
\hfill
\begin{minipage}[ht!]{.47\textwidth}
	\includegraphics[width=1\textwidth]{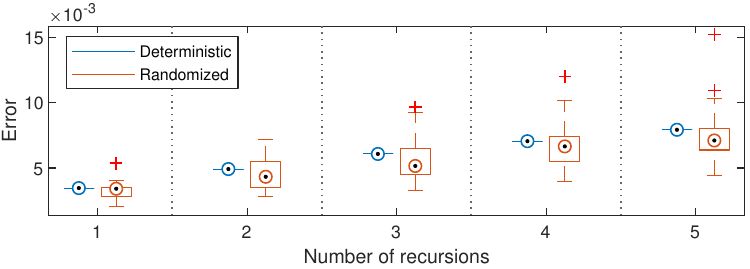}
	\includegraphics[width=1\textwidth]{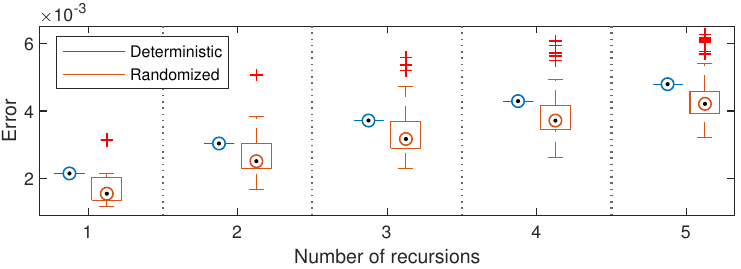}
	\includegraphics[width=1\textwidth]{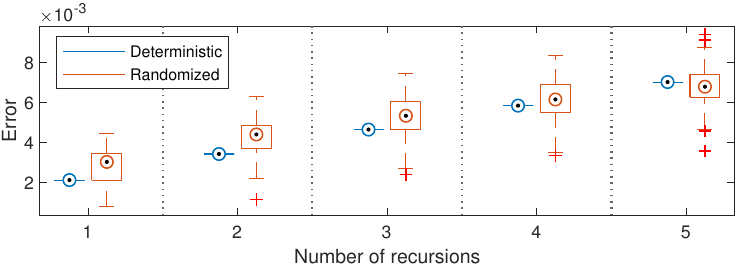}
	\includegraphics[width=1\textwidth]{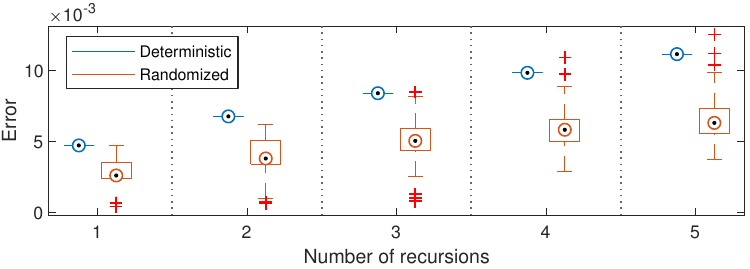}
	\includegraphics[width=1\textwidth]{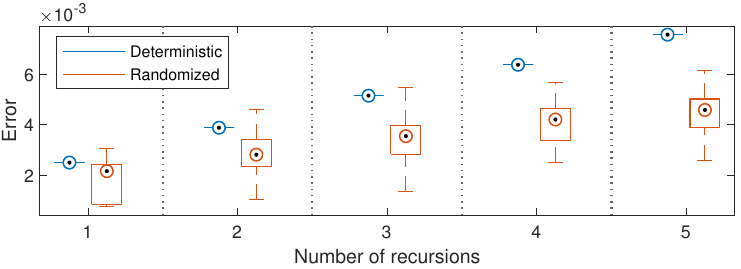}
	\captionof{figure}{(Five subplots above) Error for determinstic ABC compared to the error for the randomized counterpart, over 100 realizations of the randomized algorithm. $\Abf, \Bbf \in \Rb^{320 \times 320}$ are \textbf{uniform}, and each subplot corresponds to one realization of the pair $(\Abf,\Bbf)$.}
	\label{fig:S-experiment3-mat-type-uniform}
\end{minipage}

\begin{minipage}[ht!]{.47\textwidth}
	\includegraphics[width=1\textwidth]{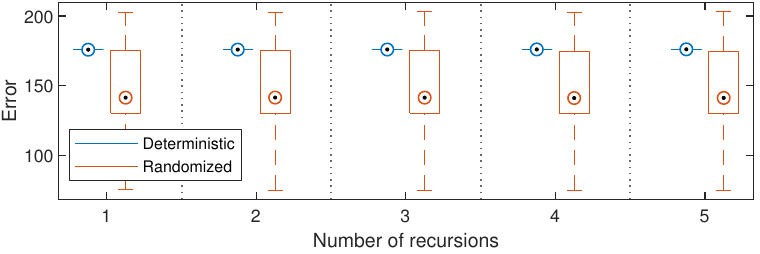}
	\includegraphics[width=1\textwidth]{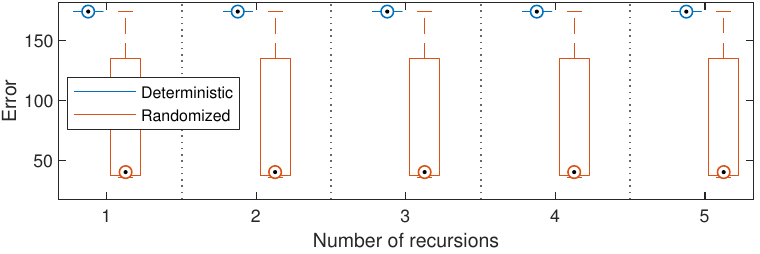}
	\includegraphics[width=1\textwidth]{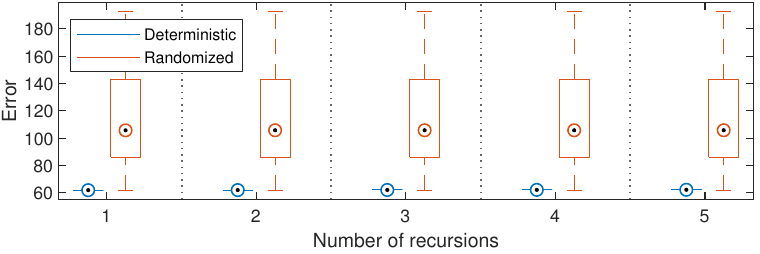}
	\includegraphics[width=1\textwidth]{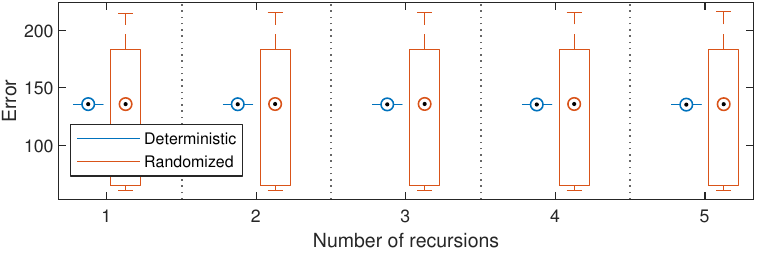}
	\includegraphics[width=1\textwidth]{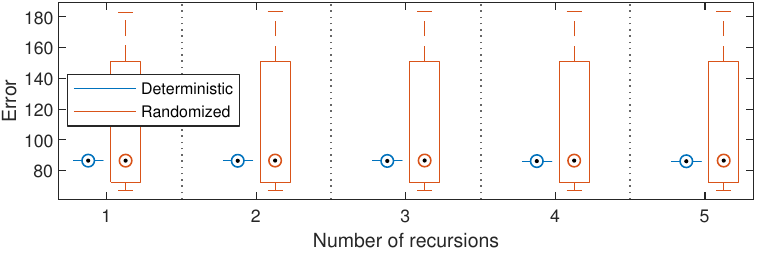}
	\captionof{figure}{(Five subplots above) Error for determinstic ABC compared to the error for the randomized counterpart, over 100 realizations of the randomized algorithm. $\Abf, \Bbf \in \Rb^{320 \times 320}$ are \textbf{type 1 adversarial}, and each subplot corresponds to one realization of the pair $(\Abf,\Bbf)$.}
	\label{fig:S-experiment3-mat-type-adversarial-1}
\end{minipage}
\hfill
\begin{minipage}[ht!]{.47\textwidth}
	\includegraphics[width=1\textwidth]{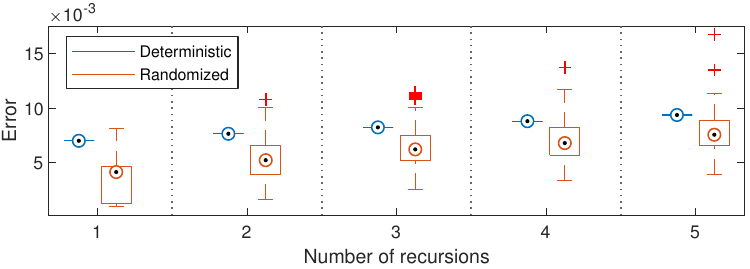}
	\includegraphics[width=1\textwidth]{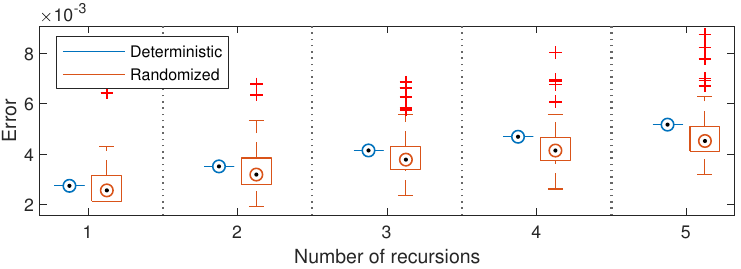}
	\includegraphics[width=1\textwidth]{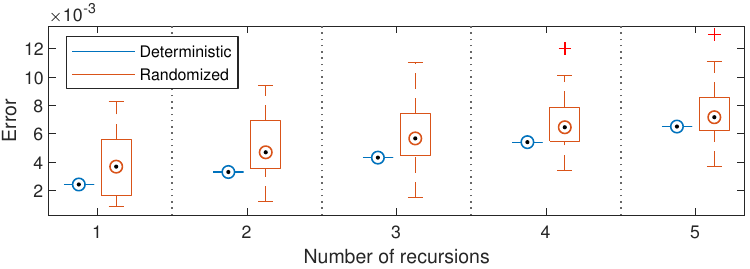}
	\includegraphics[width=1\textwidth]{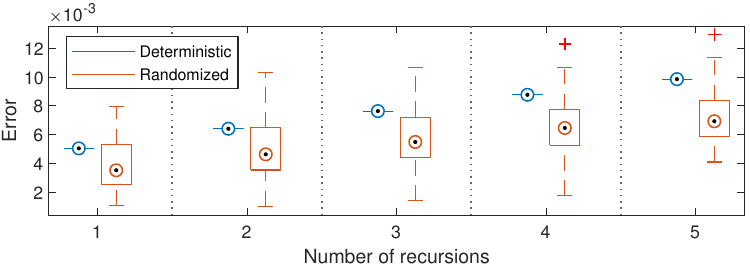}
	\includegraphics[width=1\textwidth]{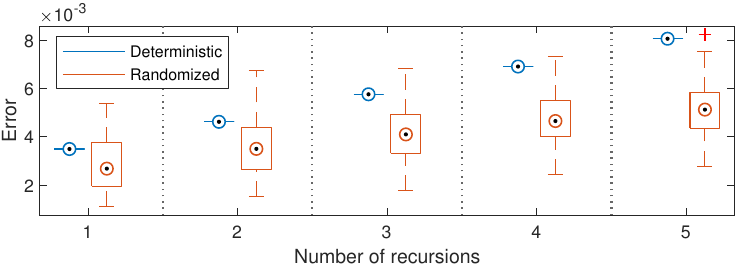}
	\captionof{figure}{(Five subplots above) Error for determinstic ABC compared to the error for the randomized counterpart, over 100 realizations of the randomized algorithm. $\Abf, \Bbf \in \Rb^{320 \times 320}$ are \textbf{type 2 adversarial}, and each subplot corresponds to one realization of the pair $(\Abf,\Bbf)$.}
	\label{fig:S-experiment3-mat-type-adversarial-2}
\end{minipage}

\begin{minipage}[ht!]{.47\textwidth}
	\includegraphics[width=1\textwidth]{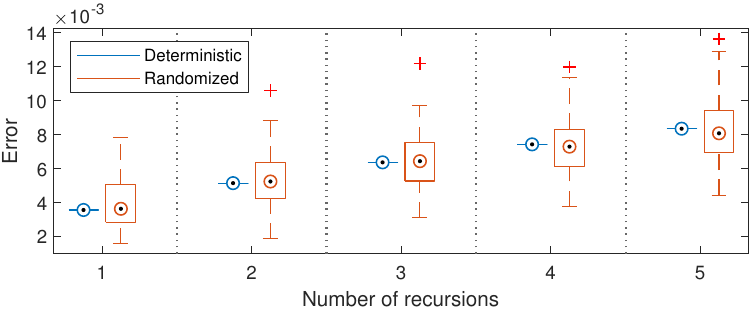}
	\includegraphics[width=1\textwidth]{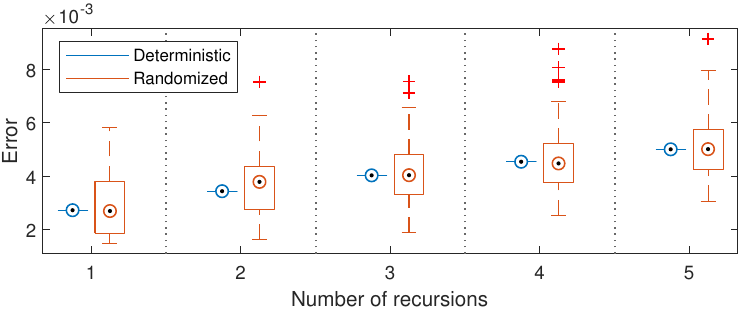}
	\includegraphics[width=1\textwidth]{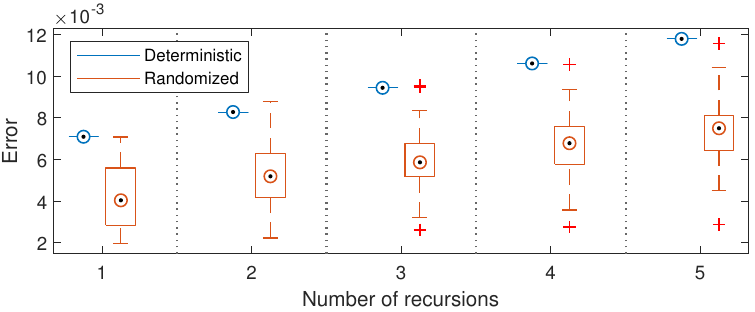}
	\includegraphics[width=1\textwidth]{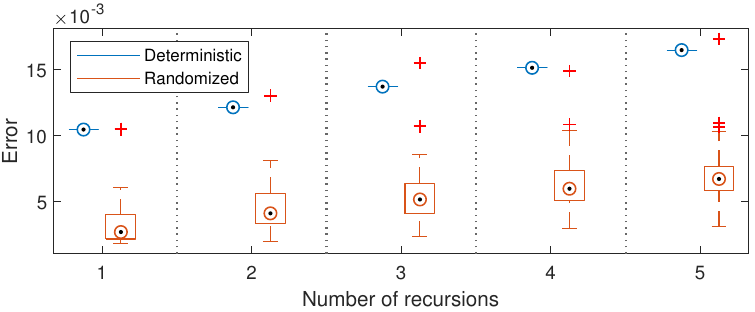}
	\includegraphics[width=1\textwidth]{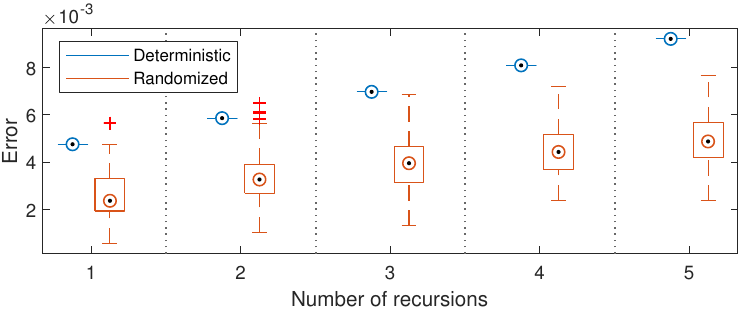}
	\captionof{figure}{(Five subplots above) Error for determinstic ABC compared to the error for the randomized counterpart, over 100 realizations of the randomized algorithm. $\Abf, \Bbf \in \Rb^{320 \times 320}$ are \textbf{type 3 adversarial}, and each subplot corresponds to one realization of the pair $(\Abf,\Bbf)$.}
	\label{fig:S-experiment3-mat-type-adversarial-3}
\end{minipage}
\hfill
\begin{minipage}[ht!]{.47\textwidth}
	\includegraphics[width=1\textwidth]{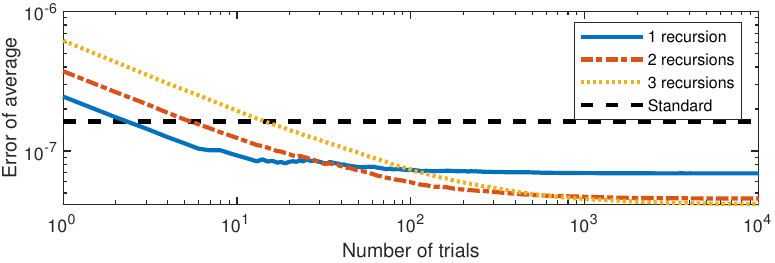}
	\includegraphics[width=1\textwidth]{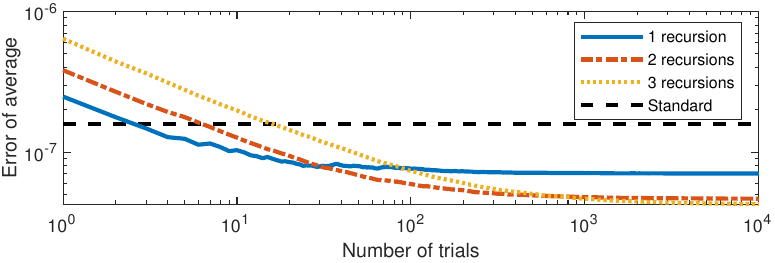}
	\includegraphics[width=1\textwidth]{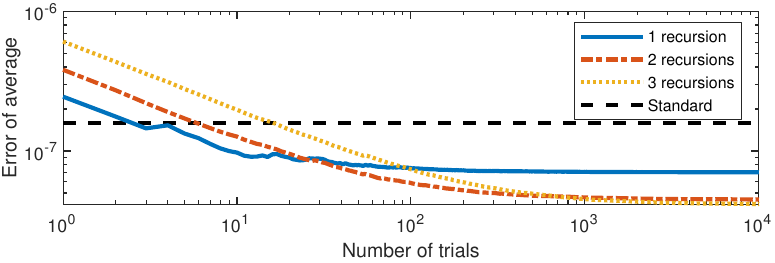}
	\includegraphics[width=1\textwidth]{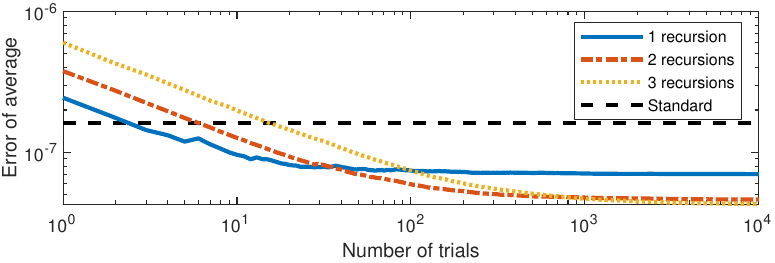}
	\includegraphics[width=1\textwidth]{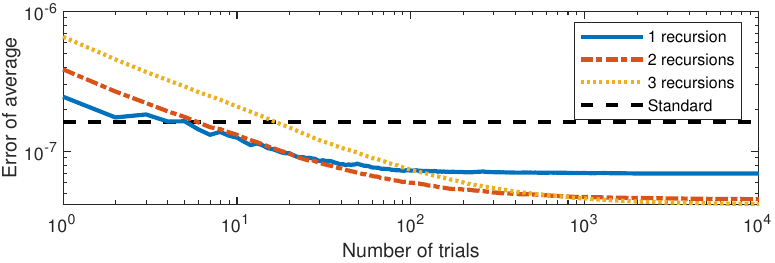}
	\captionof{figure}{(Five subplots above) Error for average of randomized EBC compared to the standard algorithm in single precision floating point arithmetic. $\Abf, \Bbf \in \Rb^{80 \times 80}$ are \textbf{Gaussian}, and each subplot corresponds to one realization of the pair $(\Abf,\Bbf)$.}
	\label{fig:S-experiment4-mat-type-normal}
\end{minipage}

\begin{minipage}[ht!]{.47\textwidth}
	\includegraphics[width=1\textwidth]{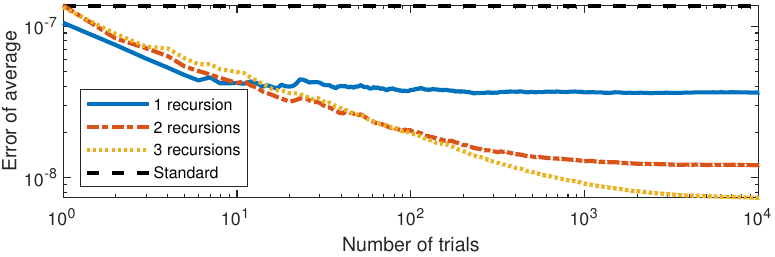}
	\includegraphics[width=1\textwidth]{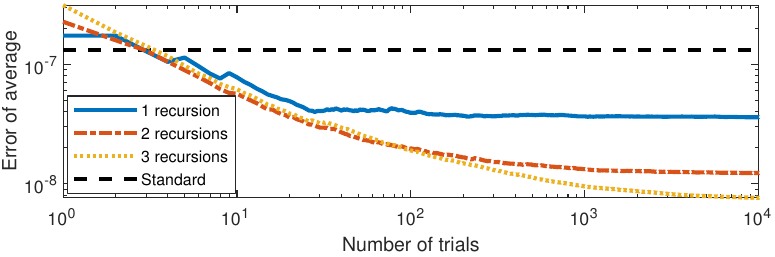}
	\includegraphics[width=1\textwidth]{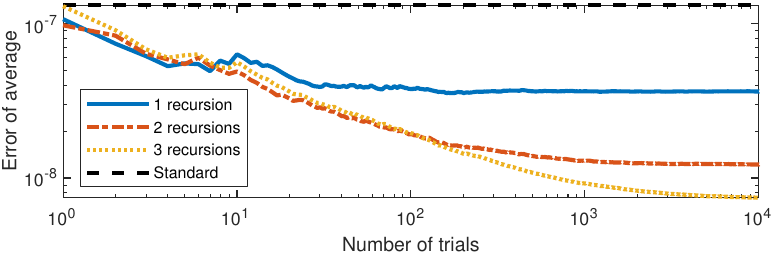}
	\includegraphics[width=1\textwidth]{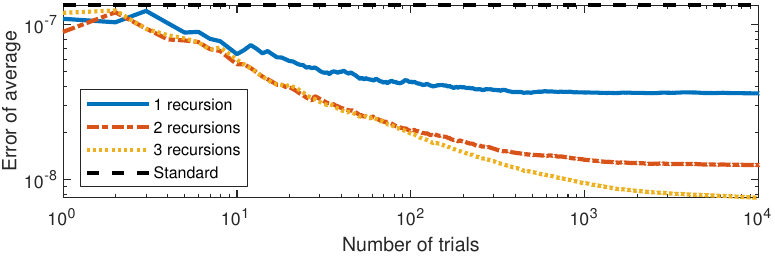}
	\includegraphics[width=1\textwidth]{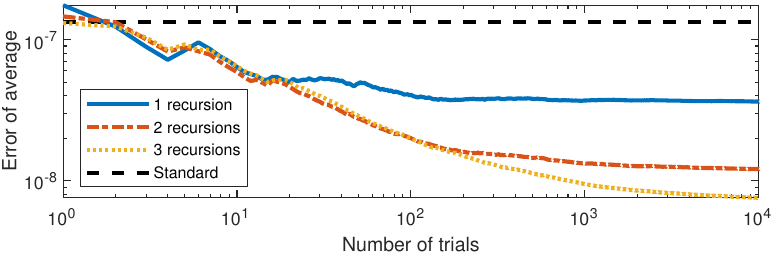}
	\captionof{figure}{(Five subplots above) Error for average of randomized EBC compared to the standard algorithm in single precision floating point arithmetic. $\Abf, \Bbf \in \Rb^{80 \times 80}$ are \textbf{uniform}, and each subplot corresponds to one realization of the pair $(\Abf,\Bbf)$.}
	\label{fig:S-experiment4-mat-type-uniform}
\end{minipage}
\hfill
\begin{minipage}[ht!]{.47\textwidth}
	\includegraphics[width=1\textwidth]{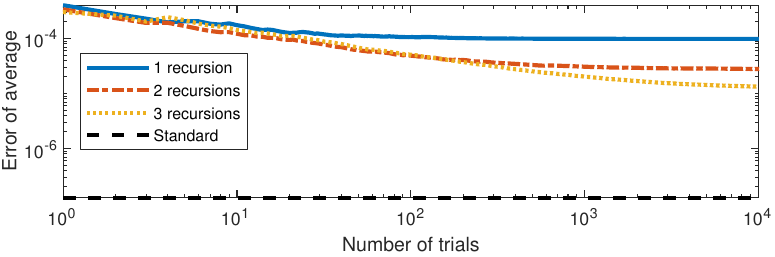}
	\includegraphics[width=1\textwidth]{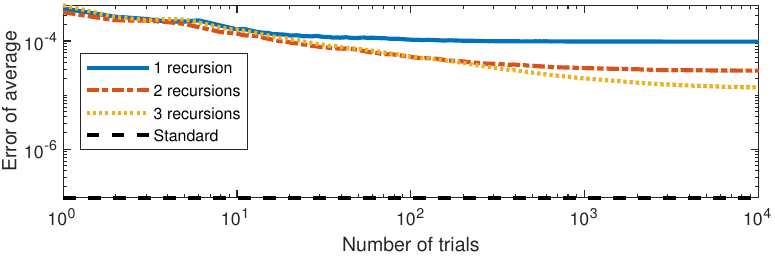}
	\includegraphics[width=1\textwidth]{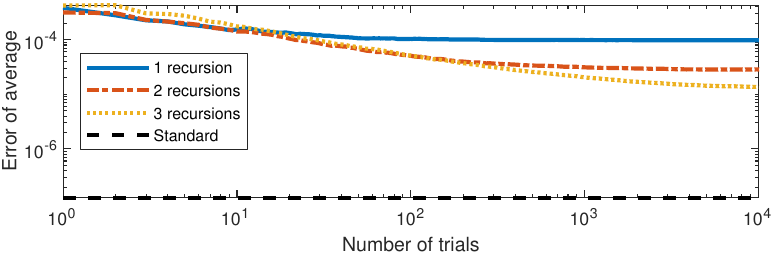}
	\includegraphics[width=1\textwidth]{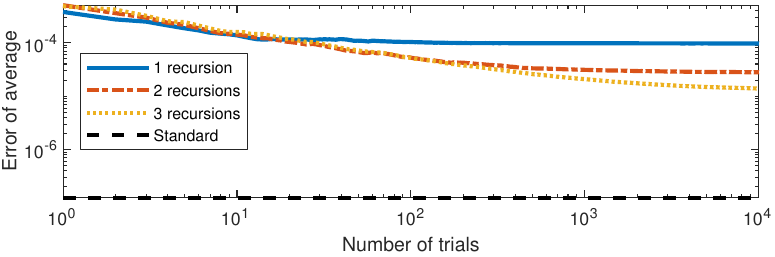}
	\includegraphics[width=1\textwidth]{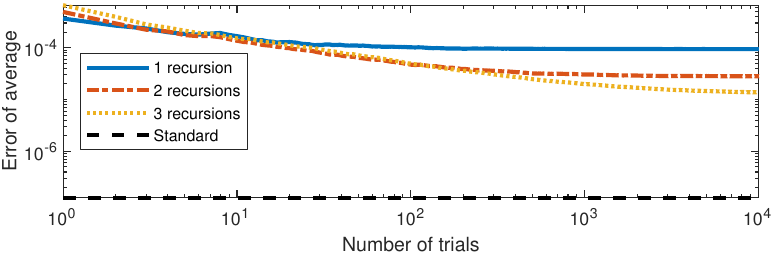}
	\captionof{figure}{(Five subplots above) Error for average of randomized EBC compared to the standard algorithm in single precision floating point arithmetic. $\Abf, \Bbf \in \Rb^{80 \times 80}$ are \textbf{type 1 adversarial}, and each subplot corresponds to one realization of the pair $(\Abf,\Bbf)$.}
	\label{fig:S-experiment4-mat-type-adversarial-1}
\end{minipage}

\begin{minipage}[ht!]{.47\textwidth}
	\includegraphics[width=1\textwidth]{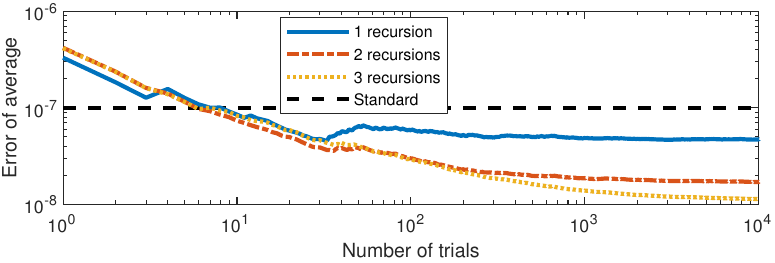}
	\includegraphics[width=1\textwidth]{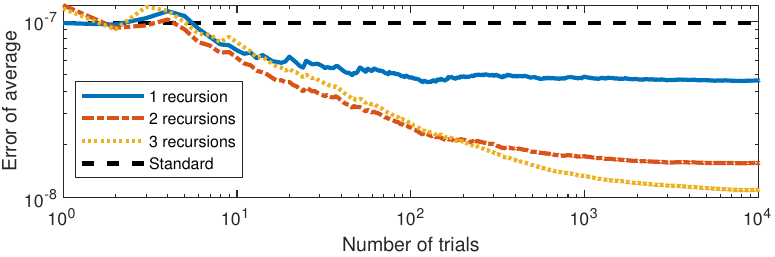}
	\includegraphics[width=1\textwidth]{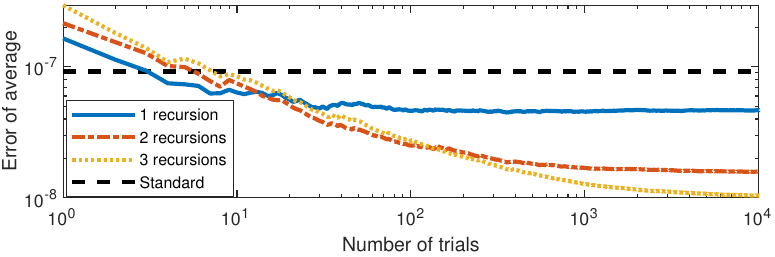}
	\includegraphics[width=1\textwidth]{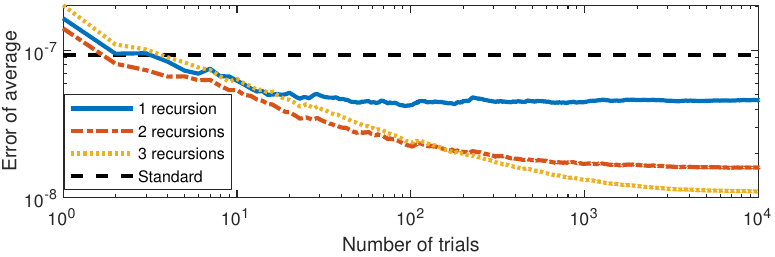}
	\includegraphics[width=1\textwidth]{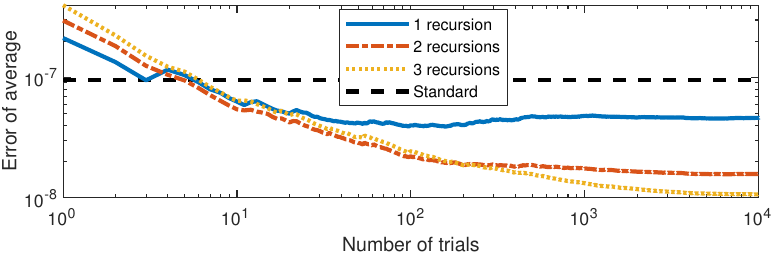}
	\captionof{figure}{(Five subplots above) Error for average of randomized EBC compared to the standard algorithm in single precision floating point arithmetic. $\Abf, \Bbf \in \Rb^{80 \times 80}$ are \textbf{type 2 adversarial}, and each subplot corresponds to one realization of the pair $(\Abf,\Bbf)$.}
	\label{fig:S-experiment4-mat-type-adversarial-2}
\end{minipage}
\hfill
\begin{minipage}[ht!]{.47\textwidth}
	\includegraphics[width=1\textwidth]{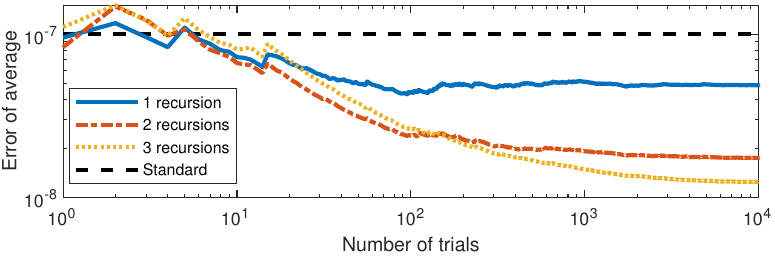}
	\includegraphics[width=1\textwidth]{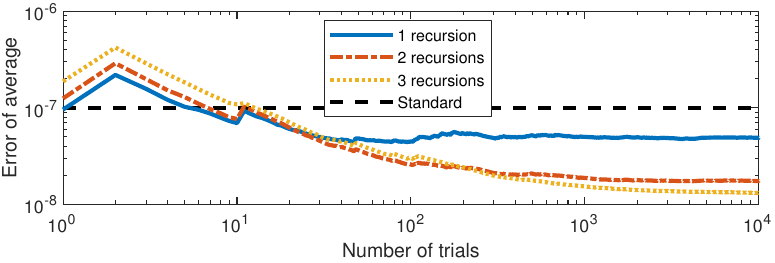}
	\includegraphics[width=1\textwidth]{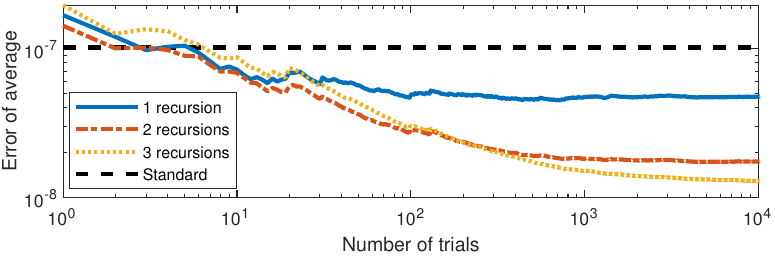}
	\includegraphics[width=1\textwidth]{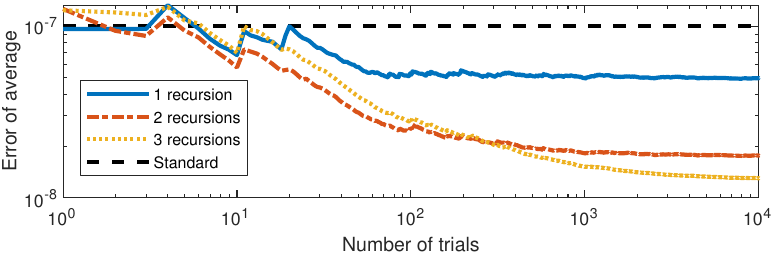}
	\includegraphics[width=1\textwidth]{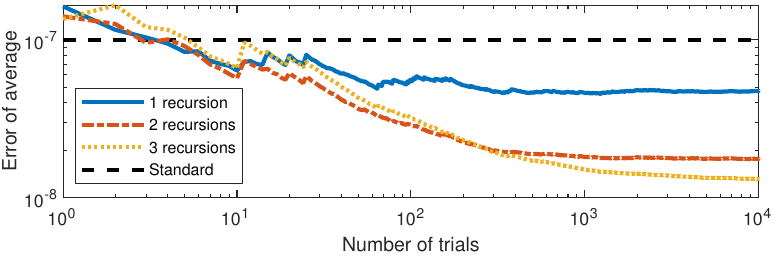}
	\captionof{figure}{(Five subplots above) Error for average of randomized EBC compared to the standard algorithm in single precision floating point arithmetic. $\Abf, \Bbf \in \Rb^{80 \times 80}$ are \textbf{type 3 adversarial}, and each subplot corresponds to one realization of the pair $(\Abf,\Bbf)$.}
	\label{fig:S-experiment4-mat-type-adversarial-3}
\end{minipage}

\begin{minipage}[ht!]{.47\textwidth}
	\includegraphics[width=1\textwidth]{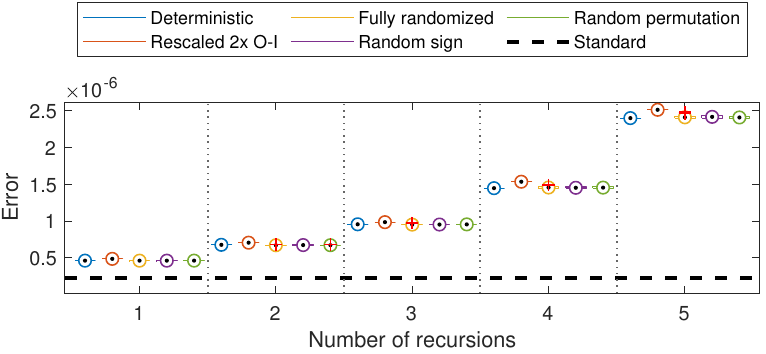}
	\includegraphics[width=1\textwidth]{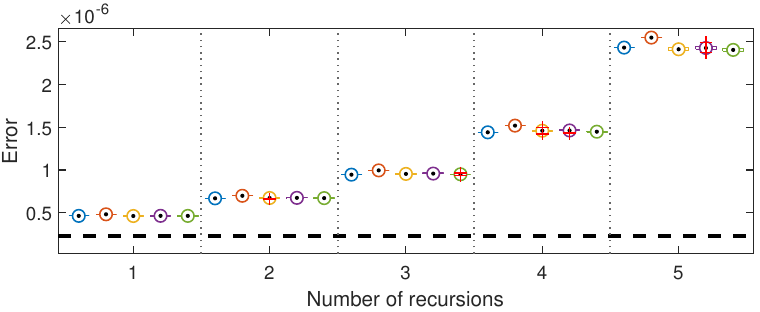}
	\includegraphics[width=1\textwidth]{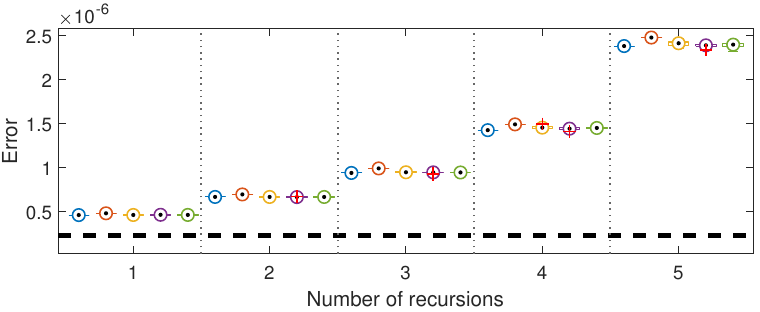}
	\includegraphics[width=1\textwidth]{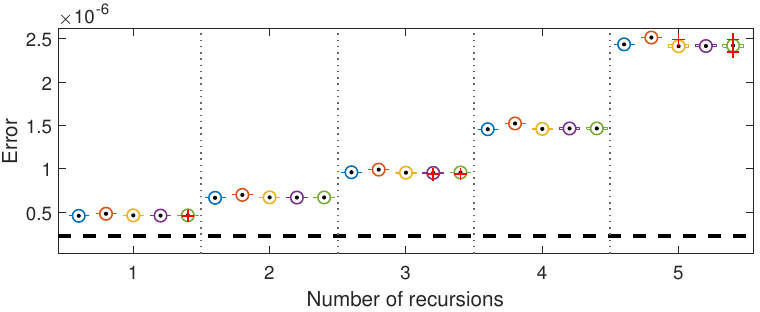}
	\includegraphics[width=1\textwidth]{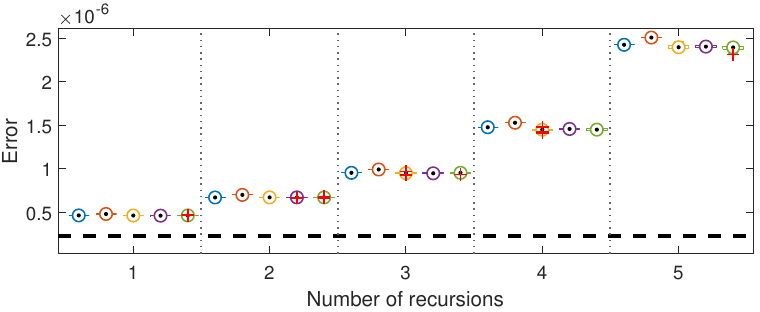}
	\captionof{figure}{(Five subplots above) Error for different variants of the EBC in single precision floating point arithmetic, over 100 realizations of the randomized algorithms. $\Abf, \Bbf \in \Rb^{320 \times 320}$ are \textbf{Gaussian}, and each subplot corresponds to one realization of the pair $(\Abf, \Bbf)$.}
	\label{fig:S-experiment5-mat-type-normal}
\end{minipage}
\hfill
\begin{minipage}[ht!]{.47\textwidth}
	\includegraphics[width=1\textwidth]{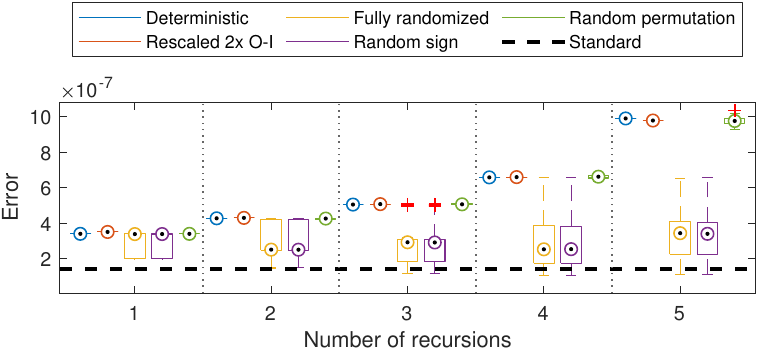}
	\includegraphics[width=1\textwidth]{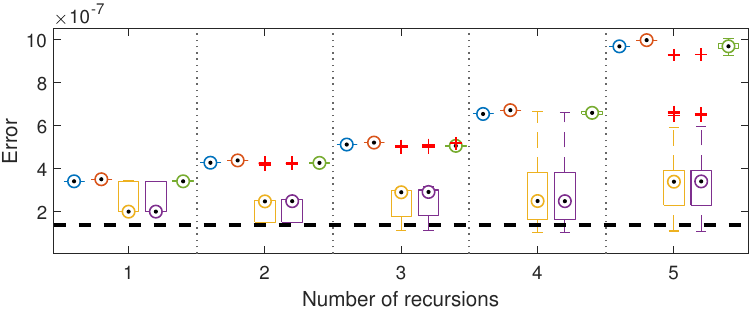}
	\includegraphics[width=1\textwidth]{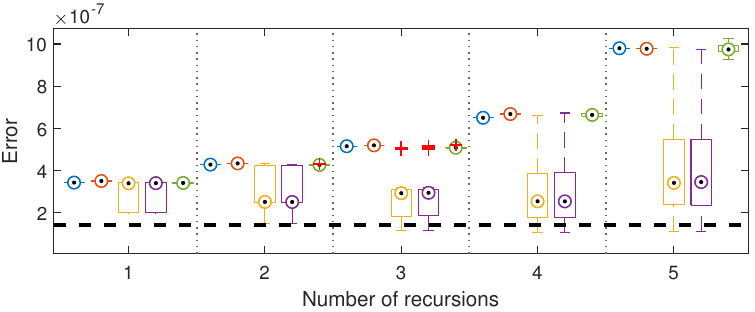}
	\includegraphics[width=1\textwidth]{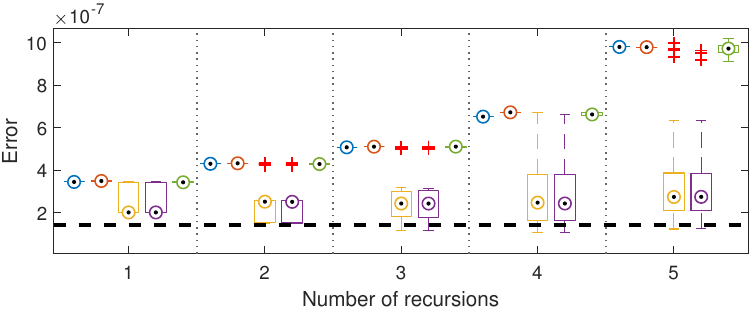}
	\includegraphics[width=1\textwidth]{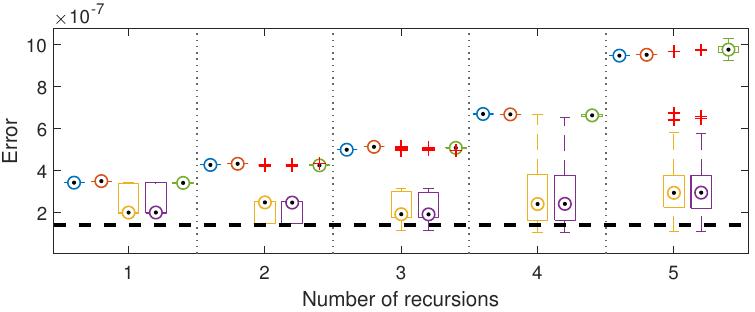}
	\captionof{figure}{(Five subplots above) Error for different variants of the EBC in single precision floating point arithmetic, over 100 realizations of the randomized algorithms. $\Abf, \Bbf \in \Rb^{320 \times 320}$ are \textbf{uniform}, and each subplot corresponds to one realization of the pair $(\Abf, \Bbf)$.}
	\label{fig:S-experiment5-mat-type-uniform}
\end{minipage}

\begin{minipage}[ht!]{.47\textwidth}
	\includegraphics[width=1\textwidth]{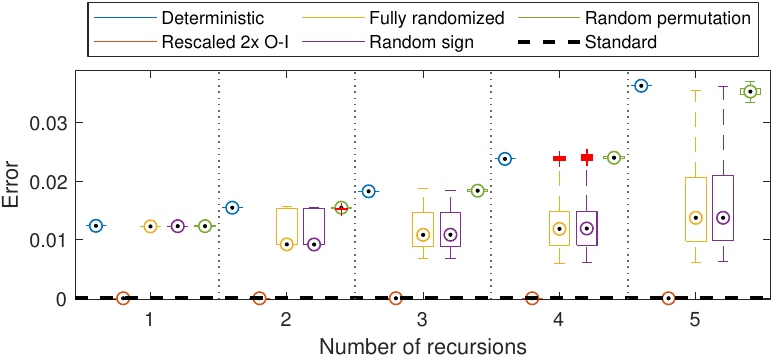}
	\includegraphics[width=1\textwidth]{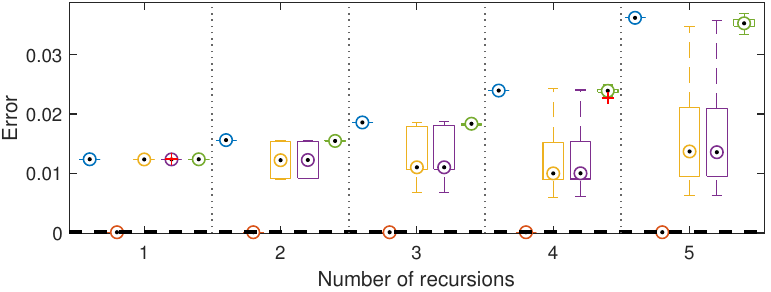}
	\includegraphics[width=1\textwidth]{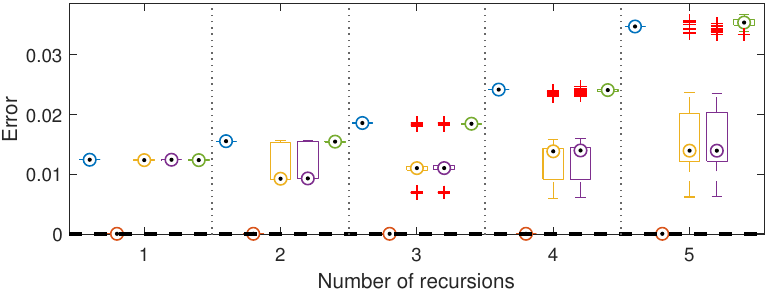}
	\includegraphics[width=1\textwidth]{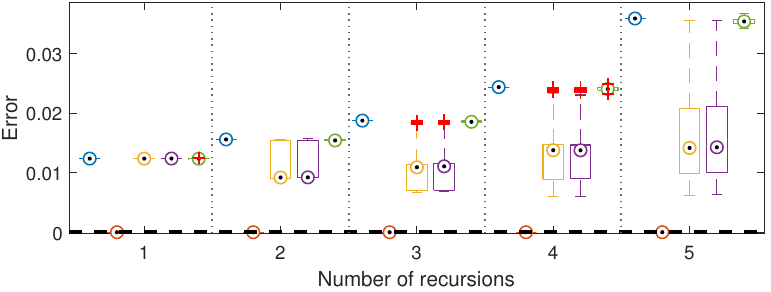}
	\includegraphics[width=1\textwidth]{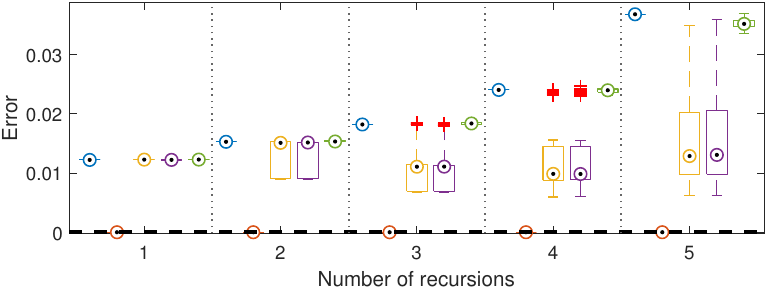}
	\captionof{figure}{(Five subplots above) Error for different variants of the EBC in single precision floating point arithmetic, over 100 realizations of the randomized algorithms. $\Abf, \Bbf \in \Rb^{320 \times 320}$ are \textbf{type 1 adversarial}, and each subplot corresponds to one realization of the pair $(\Abf, \Bbf)$.}
	\label{fig:S-experiment5-mat-type-adversarial-1}
\end{minipage}
\hfill
\begin{minipage}[ht!]{.47\textwidth}
	\includegraphics[width=1\textwidth]{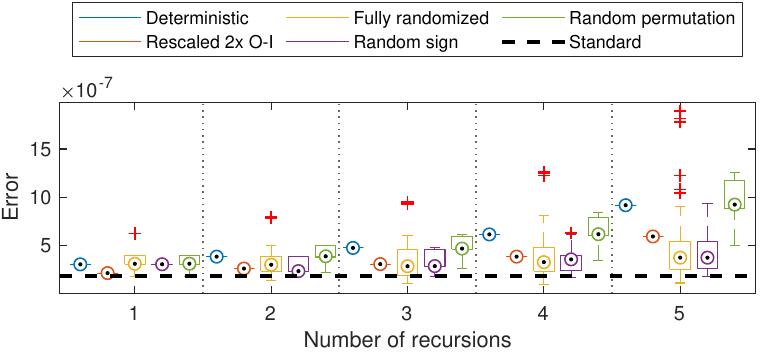}
	\includegraphics[width=1\textwidth]{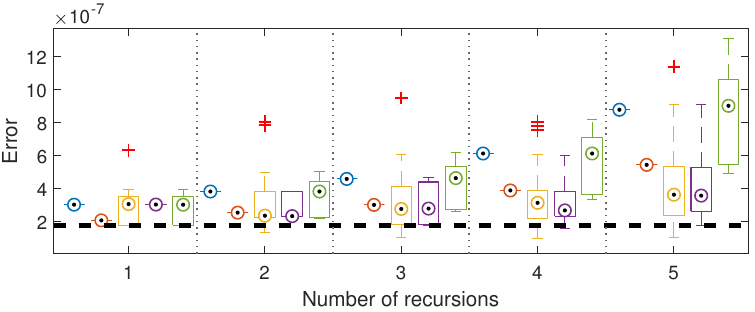}
	\includegraphics[width=1\textwidth]{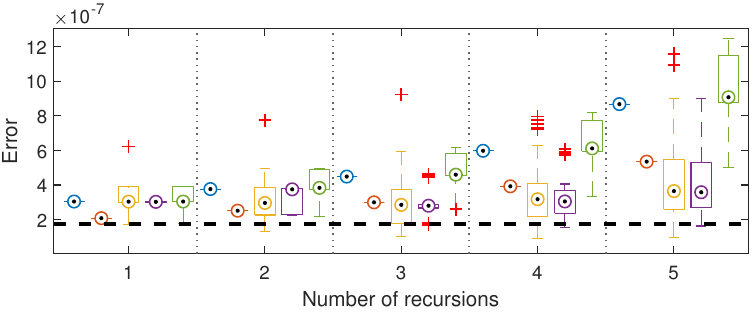}
	\includegraphics[width=1\textwidth]{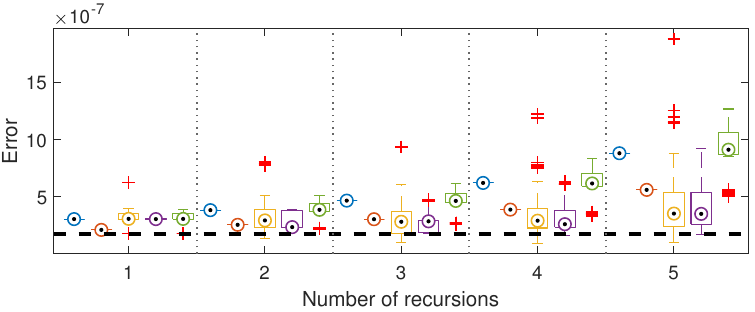}
	\includegraphics[width=1\textwidth]{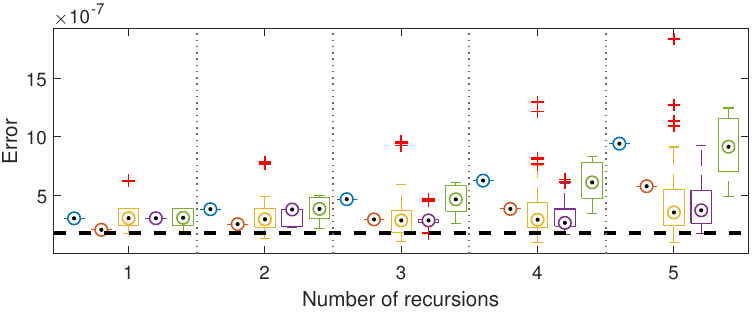}
	\captionof{figure}{(Five subplots above) Error for different variants of the EBC in single precision floating point arithmetic, over 100 realizations of the randomized algorithms. $\Abf, \Bbf \in \Rb^{320 \times 320}$ are \textbf{type 2 adversarial}, and each subplot corresponds to one realization of the pair $(\Abf, \Bbf)$.}
	\label{fig:S-experiment5-mat-type-adversarial-2}
\end{minipage}

\begin{minipage}[ht!]{.47\textwidth}
	\includegraphics[width=1\textwidth]{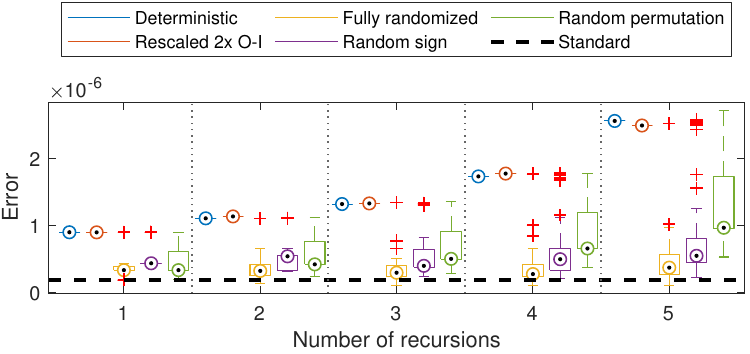}
	\includegraphics[width=1\textwidth]{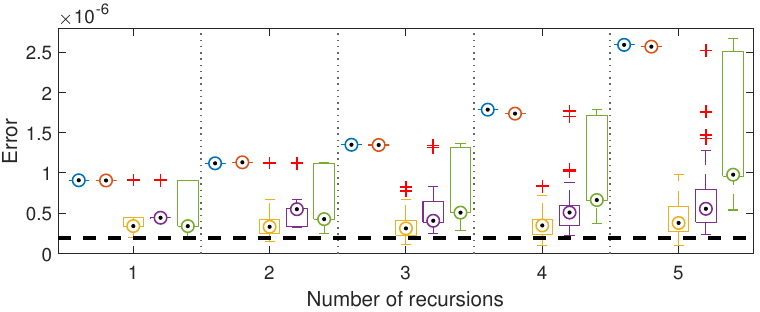}
	\includegraphics[width=1\textwidth]{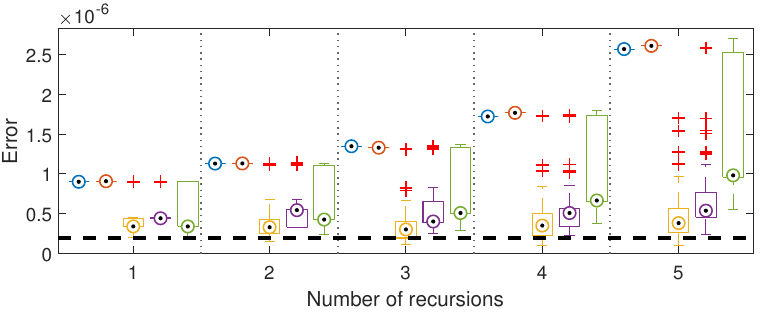}
	\includegraphics[width=1\textwidth]{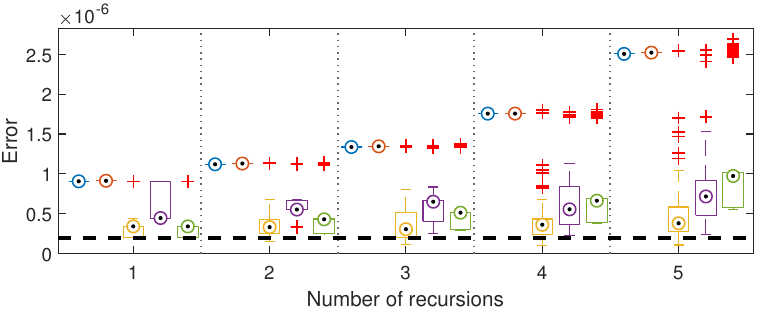}
	\includegraphics[width=1\textwidth]{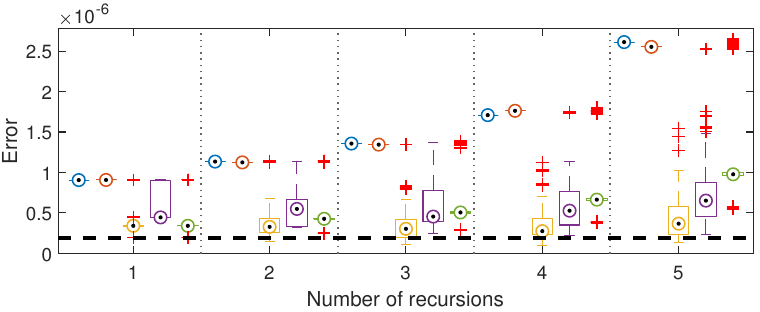}
	\captionof{figure}{(Five subplots above) Error for different variants of the EBC in single precision floating point arithmetic, over 100 realizations of the randomized algorithms. $\Abf, \Bbf \in \Rb^{320 \times 320}$ are \textbf{type 3 adversarial}, and each subplot corresponds to one realization of the pair $(\Abf, \Bbf)$.}
	\label{fig:S-experiment5-mat-type-adversarial-3}
\end{minipage}

\end{document}